\DeclareMathOperator{\Diff}{Diff}
\theoremstyle{plain}
\newtheorem{theorem}{Theorem}[section]
\newtheorem{proposition}[theorem]{Proposition}
\theoremstyle{definition}
\newtheorem{definition}[theorem]{Definition}
\theoremstyle{remark}
\newtheorem{remark}{Remark}[section]
\numberwithin{equation}{section}
\numberwithin{figure}{section}
\newcommand{\cC}{{\mathcal C}}
\renewcommand{\cH}{{\mathcal H}}
\newcommand{\cG}{\mathcal{G}}
\newcommand{\cS}{{\mathcal S}}
\newcommand{\cE}{{\mathcal E}}
\newcommand{\cU}{{\mathcal U}}
\renewcommand{\cD}{{\mathcal D}}
\newcommand{\CL}{{\mathcal L}}
\newcommand{\cT}{{\mathcal T}}
\newcommand{\cV}{{\mathcal V}}
\newcommand{\cF}{{\mathcal F}}
\newcommand{\cI}{{\mathcal I}}
\newcommand{\CC}{{\mathbb C}}
\newcommand{\RR}{{\mathbb R}}
\newcommand{\ZZ}{{\mathbb Z}}
\newcommand{\TT}{{\mathbb T}}
\newcommand{\ft}{{\mathfrak t}}
\newcommand{\vac}{|0\rangle}
\newcommand{\lb}{\llbracket}
\newcommand{\rb}{\rrbracket}
\newcommand\ev{\text{ev}}
\theoremstyle{plain} 
\newtheorem{thm}{Theorem}[section] 
\newtheorem{prop}[thm]{Proposition} 
\theoremstyle{definition} 
\newtheorem{defn}{Definition}[section]
\theoremstyle{remark}
\begin{document}

\title[T-duality of current algebras and their quantization]{T-duality of current algebras \\ and their quantization}

\author[P. Hekmati]{Pedram Hekmati}
  \address[P. Hekmati]
  {Department of Pure Mathematics\\
  University of Adelaide\\
  Adelaide, SA 5005 \\
  Australia}
  \email{pedram.hekmati@adelaide.edu.au}

\author[V. Mathai]{Varghese Mathai}
  \address[V. Mathai]
  {Department of Pure Mathematics\\
  University of Adelaide\\
  Adelaide, SA 5005 \\
  Australia}
  \email{mathai.varghese@adelaide.edu.au}
  
\date{\today}

\thanks{The authors acknowledge the support of the Australian Research Council's Discovery 
Project Scheme (under project numbers DP110100072 and DE120102657). We also thank Joel Ekstrand and Reimundo Heluani for bringing their research to our attention.}

\subjclass[2010]{17B63, 17B69, 53D18, 83E30}

\begin{abstract}
In this paper we show that the T-duality transform of Bouwknegt, Evslin and Mathai
applies to determine isomorphisms of certain current algebras and their associated vertex algebras 
on topologically distinct T-dual spacetimes  compactified to circle bundles with $H$-flux.  \\
\\
\center \small \it Dedicated to Steven Rosenberg on the occasion of his $\it{60}^{th}$ birthday \em
\end{abstract}

\maketitle


\section{Introduction}

T-duality is a fundamental symmetry in string theory, which in particular gives an equivalence 
between type IIA and IIB string theories on spacetimes that are compactified in one spatial direction. The 
duality in this case is simply an interchange of the radius $R\leftrightarrow 1/R$, or more precisely the Fourier transform in the circle direction. The relationship between T-dual manifolds in a topologically trivial 
$H$-flux was first worked out using non-linear sigma models by Buscher in \cite{Bu} 
and was further  elaborated upon by
Ro\v{c}ek and Verlinde in \cite{RV}.  In the presence of a  topologically
non-trivial integral 3-form flux on spacetimes that are compactified 
as circle bundles, the topology and background flux of the
T-dual spaces were determined for the first time by
Bouwknegt, Evslin and Mathai in~\cite{BEM1,BEM2}. There the authors also established an isomorphism 
between the charges of the Ramond-Ramond fields on spacetime and its T-dual partner in twisted K-theory and twisted
cohomology.

In ~\cite{CG}, Cavalcanti and Gualtieri showed that the T-duality 
transformations in \cite{BEM1,BEM2} can be understood in
the framework of generalized geometry introduced by Hitchin  \cite{Hi03} and 
developed by Gualtieri \cite{Gu03}.  
The title role in this geometry is played by the generalized tangent bundle. This is the direct sum of the tangent and cotangent
bundles of a manifold and comes equipped with a canonical
orthogonal structure and a Courant bracket. The latter depends, up to isomorphisms, on the
choice of a real degree three  cohomology class of the manifold. The analogy with the ordinary Lie bracket also holds up to skew-symmetry only, since the Courant bracket violates the Jacobi identity. Generalized geometry has the very nice feature that it subsumes complex and symplectic geometry as particular extremal limits.  In \cite{CG} it was shown that T-duality can be viewed as
an isomorphism between the underlying orthogonal  and Courant 
algebroid structures of the T-dual circle bundles, where the $H$-flux now plays the role of twisting the Courant brackets. 

In the present paper we use this perspective on T-duality, as well as the observation 
of Alekseev and Strobl \cite{AS},  which maps a particular current algebra 
to the Courant algebroid. 
These current algebras appear as Noether symmetries of certain sigma models that 
are of fundamental importance in string theory.
We show that T-duality 
gives an isomorphism of an invariant version 
of the Alekseev-Strobl current algebras on T-dual spacetimes that are circle bundles with $H$-flux. 
We  enhance this further to an isomorphism of the associated universal vertex algebras \cite{K} on the 
T-dual pair. A key observation here is the two ways in which the $H$-flux can be transported to the loop space.
This is achieved either by transgression, which determines a twisted symplectic 2-form on the 
cotangent bundle of the loop space, or by looping the flux to an integral 3-form on the loop space. The relationship
between the algebraic structures associated to these forms was discovered in \cite{AS} and is clearly stated and derived in Proposition \ref{prop2}. Similar kinds of T-duality isomorphism of vertex algebras and generalizations thereof have recently been established in some special backgrounds, see for instance \cite{AH, KO}.

For higher rank torus bundles, an analogous T-duality procedure
 works provided that the integral 3-form satisfies $i_V i_W  H=0$, where $V,W$ are vector fields tangent to the fibres, as was shown in \cite{BHM, MR1, MR2, MR3}.
That is, one can consistently iterate the T-duality procedure in this case one circle at a time. 
The higher rank case was also carried out in the context of generalized geometry and 
(twisted) Courant algebroids in this context in \cite{CG}. 
Relaxing the restriction above on the $H$-flux, it was shown by
Mathai and
Rosenberg~\cite{MR1, MR2}  that 
the T-dual manifold may be viewed as a noncommutative space, which is a $C^*$-bundle with fibres 
that are (stabilized) noncommutative tori.  It is an open problem 
to explore the analogue of this in terms of generalizations of current algebras and their quantizations. 

The first six sections of the paper reviews the literature in a form that is suitable for our context.
Our first main result is Proposition \ref{thm1} which is the bottom horizontal arrow in the diagram below, 
$$
\xymatrix@C7pc{
\text{Poisson Algebra} \ar[r]_{\text{Inclusion}\qquad\qquad} \ar[d]_{\text{Fourier Transform}}
& \text{Lie Algebra of Local Distributions}  \ar[d]^{\text{Taylor Expansion}} \\
\text{Poisson Vertex Algebra}  \ar[r]_{\text{Proposition 7.1}\quad}  &  \text{Weak Courant Dorfman Algebra} 
}
$$
It says that any Poisson vertex algebra gives rise to a weak Courant Dorfman algebra. The latter was introduced by
Ekstrand and Zabzine \cite{E, EZ} by relaxing certain axioms of a Courant-Dorfman algebra \cite{R}. It was further shown in \cite{EZ} that such a structure is naturally induced on the space of local functions by the Lie bracket on currents (or local distributions).  Proposition \ref{prop1} asserts that the two different ways of constructing a weak Courant Dorfman algebra
from a Poisson algebra are equivalent, or in other words, the diagram above commutes.
The last two sections contain our other main results, which detail the T-duality isomorphisms of 
Alekseev-Strobl current algebras and of their associated universal vertex algebras.

This paper is an initial step in our program towards establishing a T-duality isomorphism
for more general current algebras
and their quantizations on  toroidally compactified spacetimes with $H$-flux.

\section{Cotangent bundle of the loop space}
Let $E$ be a smooth manifold and denote by $LE = C^\infty(S^1, E)$ the space of smooth parametrized loops
endowed with the standard structure of a Fr\'echet manifold. Recall that a tangent vector to a 
loop $x \in LE$ is a vector field along the map $x(t)$, namely
$T_x LE = \Gamma(S^1, x^* TE)$ is the space of smooth sections of the pullback bundle $x^*TE$. 
In fact, there is a natural diffeomorphism of the manifolds $TLE$ and $LTE$ which covers the identity on $LE$ \cite{S}. 
The following map
\begin{equation}\label{inclusion}
LE \to TLE, \ \ x(t) \mapsto \partial x(t)=(x_*)\left(\frac{d}{dt}\right)(t)
\end{equation}
defines an inclusion of the loop space into its tangent bundle, where $t\mapsto \left(t, \frac{d}{dt}\right)$ is
the natural section of $TS^1\to S^1$.
The group of orientation preserving diffeomorphisms of the circle $\Diff_+(S^1)$ acts smoothly on
the loop space by precomposition,
$$ \Diff_+(S^1) \times LE \to LE, \ \ (\phi, x) \mapsto x \circ \phi $$
and its fixed point set is the space $E$ of constant loops.

The cotangent bundle of the loop space $T^*LE$ is the phase space of non-linear sigma models on the cylinder $\Sigma = S^1\times \RR$ with target space $E$. However unlike $TLE$, the definition of the cotangent bundle is more subtle. While $LT^*E$ is modelled on 
the vector space $L\RR^n$, the process of dualizing means that the model space for $T^*LE$ is the dual space 
$(L\RR^n)^*$ consisting of $\RR^n$-valued distributions on the circle. However, the inclusion 
$$LT^*E = \text{Hom}_{L\RR}(LTE, L\RR)\subset \text{Hom}_\RR(LTE, L\RR) \xrightarrow{\cE} \text{Hom}_\RR(LTE,\RR)=T^*LE $$ 
induced by the $S^1$-equivariant functional $\cE \colon L\RR \to \RR, \ f \mapsto \cE(f)=\int_{S^1}fdt$  is injective. 
There is further a natural pairing
$$T_x LE  \times L_x T^*E  \to C^\infty(E), \ \ (\xi,\alpha ) \to \int_{S^1} \langle \xi(t), \alpha(t) \rangle dt $$
induced by the pairing $\langle \cdot, \cdot\rangle\colon TE \times T^*E \to C^\infty(E)$. We shall therefore adopt $LT^*E$ as our definition of the cotangent bundle $T^*LE$, rather than $(TLE)^*$. 

Let $q:T^*LE\to LE$ denote the projection map. The cotangent bundle carries a canonical 1-form $\eta\in \Omega^1(T^*LE)$ defined by
$$ \eta_{\alpha}(\chi) =  \int_{S^1} \langle (q_*)_{\alpha}(\chi)(t),  \alpha(t)\rangle dt \ ,$$
where $\alpha \in T^*_x LE$ and $\chi\in T_{\alpha}(T^*LE)$. Its differential determines the canonical symplectic form $\omega = d\eta \in \Omega^2_\ZZ(T^*LE)$,  which in local Darboux coordinates $(x^i,p_i)$ on $T^*E$ takes the familiar form
$$ \omega = \sum_{i=1}^N\int_{S^1} dp_i(t)\wedge dx^i(t) dt\ ,$$
where $N= \text{dim}\ E$. Here $d:\Omega^\bullet(T^*LE) \to  \Omega^{\bullet+1}(T^*LE) $ denotes the usual de Rham differential.
Since $\omega$ is exact, the associated line bundle $\CL_\omega$ over $T^*LE$ is a trivial bundle with connection
$$ \nabla = d + \eta \ .$$

We conclude with some remarks on the topology of the loop space. Namely if $E$ is an H-space,  then $LE \simeq \Omega E \times E$ where $\Omega E$ denotes the space of loops based 
at the identity element in $E$, and it follows that
$$\pi_*(LE) = \pi_{*+1}(E)\times \pi_*(E) \ .$$
More generally, when $E$ is a smooth connected manifold with base point $x_0$, the fibration sequence 
$$\Omega E \to LE \xrightarrow{\ev_0} E$$
is locally trivial and it splits by the embedding of $E$ into $LE$ as the subspace of constant loops. Here $\ev_0$ denotes the evaluation of loops at $t=0$.

\section{Transgression}

Transgression determines a homomorphism in integral cohomology
$$\tau \colon H^n(E,\ZZ) \to H^{n-1}(LE,\ZZ) \ ,$$
which on the level of differential forms is given by pullback along the evaluation map,
$$\ev\colon S^1 \times LE \to E, \ \ (t, x) \mapsto x(t) \ ,$$
followed by fibre integration over the circle,
$$\tau\colon \Omega^\bullet(E) \to \Omega^{\bullet-1}(LE), \ \ \alpha \mapsto \tau(\alpha) = \int_{S^1}\ev^* (\alpha) \ .$$
This map is invariant under the action of $\Diff_+(S^1) \times \Diff_{\alpha}(E)$, where $\Diff_{\alpha}(E)$
denotes the $\alpha$-preserving diffeomorphisms of $E$, and it commutes
with exterior differentiation,
$$d_{LE}\circ \tau = \tau \circ d_E.$$

Let $H\in \Omega^3_\ZZ(E)$ denote a closed differential 3-form on $E$ with integral periods and let $\cG_H$ be an associated bundle gerbe with $H$ as its Dixmier-Douady invariant \cite{M}. A geometric realisation of the cohomological transgression homomorphism
$$\tau \colon H^3(E,\ZZ) \to H^2(LE,\ZZ)  $$
 has been described by Brylinski and McLaughlin in the language of sheaves of groupoids \cite{B, BM}, and more recently by Waldorf in the framework of bundle gerbes \cite{W}. In fact, the transgression can be refined to a homomorphism in Deligne cohomology, sending a bundle gerbe $\cG_H$ with connective structure on $E$ to a principal $\CC^*$-bundle $\mathcal{L}_H$ with connection on $LE$. 

Following the description  in \cite{W}, the fibre of $\CL_H$ over a loop $x\in LE$ consists of the set of isomorphism classes of flat trivialisations of the pullback bundle gerbe $x^* \cG_H$ on $S^1$. Identifying the elements of $\CC^*$ with principal $\CC^*$-bundles with flat connection over $S^1$, via $z \mapsto P_z$ such that Hol$_{P_z}(S^1) =z$, the right $\CC^*$-action on the fibres is given by
$$\CL_H \times \CC^* \to \CL_H, \ \ ([\cT],z) \mapsto [\cT\otimes P_z] \ .$$
The transgressed connection $\theta$ on $\CL_H$ can be characterised by its parallel transport. Namely, consider a path $p:[0,1]\to LE$ and denote by $\cT_0 \in  \CL_{H,x_0}$ and $\cT_1 \in  \CL_{H,x_1}$ the trivialisations of $\cG$ at the end-loops of the path. The parallel transport $P_{p,\theta}:\CL_{H,x_0} \to \CL_{H,x_1}$ is determined by the bundle gerbe holonomy
$$\text{Hol}_{\cG_H}(p, \cT_0, \cT_1)$$
over the associated cylinder $p: S^1\times \RR \to E$, where the  fibre elements  $\cT_0, \cT_1$ at the endpoints determine the boundary conditions. Locally on  a coordinate chart $U_i\subset E$, the pullback  of the connection 1-form $\theta_i$  to $LU_i$ is given by the transgressed curving $B_i\in \Omega^2(U_i)$ on $\cG_H$,
$$ \theta_i = \int_{S^1} \ev^* B_i \ ,$$
where $dB_i = H|_{U_i}$. The curvature of $\CL_H$ on the other hand is described globally by the transgressed flux,
$$\tau(H)_x(\xi,\chi) =\int_{S^1} \iota_{\partial x(t)}H_{x(t)}(\xi(t),\chi(t))dt$$
where $\xi,\chi \in T_x LE$. Moreover, if $\CL_{\omega}$  denotes the line bundle associated to the canonical symplectic structure on $T^*LE$, then the product bundle
$$\CL_{\omega_H} = \CL_{\omega} \otimes q^*\mathcal L_H \ ,$$
has  as its curvature the twisted symplectic form $\omega_H\in \Omega^2_\ZZ(T^*LE)$ given by 
\begin{equation}\label{symp}
\omega_H = \omega +q^* \tau(H) \ ,
\end{equation}
where as earlier $q: T^*LE\to LE$ denotes the projection map. The $L^2$-sections of this pre-quantum line bundle correspond to the wave functions of the quantised sigma model. 

\section{Space of local functionals}

Let us regard the cotangent bundle $T^*LE$ as sitting inside the space of smooth global sections of the trivial bundle $X:=T^*E\times S^1\to S^1$ and denote by $J^\infty (X)\to S^1$ the associated infinite jet bundle. Recall that points on $J^\infty (X)$ are equivalence classes of smooth sections of $X$ whose Taylor coefficients coincide to all orders. Since the fibered manifold $X$ is a product, it follows that the infinite jet bundle is also trivializable,
$$J^\infty (X) \cong S^1 \times \cH\ ,$$
where the fibre $\cH$ is an infinite dimensional vector space. In  local canonical  coordinates $(x^i, p_i)$ on $T^*E$, the induced coordinates on the infinite jet bundle are given by $ (t, x^i,p_i,\partial x^i, \partial p_i, \dots)$.

Since $J^\infty(X)$ is obtained as an inverse limit of topological spaces $\{ J^k(X)\}$, where $J^k(X)$ is the fibre bundle of $k$-jets of  smooth sections of $X$, there exist natural projections $\pi^k\colon J^\infty(X) \to J^k(X)$ for all $k\in \ZZ_+$. 
\begin{defn} A smooth function $f \in C^\infty(J^\infty (X))$ is called \emph{local} if it factorizes as $ f = \tilde f \circ \pi^k$ for some $\tilde f\in C^\infty(J^k (X))$ and some non-negative integer $k$. Let $\cV_{loc} $ denote the subspace of  local functions on the infinite jet bundle.

\end{defn}
In other words, local functions on $J^\infty (X)$ are pullbacks of smooth functions on a finite jet bundle. On open charts, the local functions thus only depend on finitely many loop derivatives of the coordinate functions. We note that $\cV_{loc}$ is a unital, commutative, associative algebra under pointwise multiplication.

In order to define local functionals, let us introduce a differential on the space of local functions,
$$d_h
\colon \cV_{loc} \to \cV_{loc} \otimes \Omega^1(S^1), \ \ f \mapsto \partial f dt \ ,$$
where $\partial f$ is the total derivative in the loop direction. In any system of local coordinates $\{u_i\}_{i\in \cI}$ on $T^*E$ where $\cI=\{1,2,\dots, 2N\}$, it takes the familiar form
$$d_h f = \left(\partial_t f + \sum_{i\in \cI, m\in \ZZ_+} u^{(m+1)}_i \frac{\partial f} {\partial u^{(m)}_i} \right) dt \ ,$$
with $ u^{(m)}_i =\partial_t^m u_i$. The meaning of this differential will be discussed further in the next section. 
\begin{defn}The space of \emph{local functionals} is defined as the  quotient
$$ \cF^0 := \left(\cV_{loc} \otimes \Omega^1(S^1) \right)/ d_h \cV_{loc} \ .$$
\end{defn} 
Elements $ [f]\in \cF^0$ can be written unambiguously as an integral $[f] = \int_{S^1} f dt$, since we are not dealing with boundary conditions. For later purposes, we also introduce a notion of distributions.
\begin{defn}The space of \emph{local distributions} $\cD$ is defined as the space of bilinear maps,
$$ J\colon\cV_{loc} \times L\RR \to \cF^0, \ \ (f,\phi) \mapsto J_\phi(f) = [f\phi]$$
where $\phi \in L\RR  = C^\infty(S^1,\RR)$ denotes a test function. 
\end{defn} 
Notice that there is a natural inclusion $\cF^0 \subset \cD$ by restricting to the constant test function $\phi = 1$. We remark further that unlike the space of local functions, $\cF^0$ and $\cD$ are not associative algebras.

\section{Variational bicomplex}

The differential $d_h$ introduced in the previous section occupies a natural place in the variational bicomplex which we now briefly review. The space of differential forms $\Omega^\bullet(J^k(X))$ on the $k$-jet bundle  are defined as sections of the exterior algebra bundle $\Lambda^\bullet(J^k(X))$. These form a direct limit system whose direct limit defines the vector space $\Omega^\bullet(J^\infty(X))$. Here and henceforth we shall implicitly restrict to smooth  local functions, so that $\Omega^0(J^\infty(X)) = \cV_{loc}$. The de Rham complex $(\Omega^\bullet(J^\infty(X)), d)$  on the infinite jet bundle contains a differential contact ideal $\cC(J^\infty(X))$ generated by 1-forms $\theta$ which satisfy $j^\infty(\sigma)^*\theta = 0$, where $j^\infty(\sigma)$ is the point on $J^\infty(X)$ associated to the smooth section $\sigma \in\Gamma(X)$. Locally these contact 1-forms can be written
$$\theta_i^m = du^{(m)}_i - u_i^{(m)}dt $$
and they give a meaning to vertical differential forms on $J^\infty(X)$. Likewise there is a notion of horizontal differential forms with components only along the base manifold, consult \cite{A} for a precise definition. Together they induce a splitting of the de Rham complex into a bicomplex
$$ \Omega^p(J^\infty(X)) = \bigoplus_{q+r = p} \Omega^{q,r}(J^\infty(X))$$
with the differential $ d=d_h + d_v $. The differential in the previous section thus corresponds to
$$d_h\colon \Omega^{0,0}(J^\infty(X)) \to \Omega^{1,0}(J^\infty(X)), \ \ f\mapsto \partial f dt \ .$$ 
More generally, we have the \emph{augmented variational bicomplex}
\\

\mbox{ $\begin{CD}
@.  @. @Ad_vAA @Ad_vAA  @A\delta AA   \\
@. 0 @>>> \Omega^{0,2} @>d_h>> \Omega^{1,2} @>I >>\mathcal{F}^2 @>>> 0\\
@.  @. @Ad_vAA @Ad_vAA  @A\delta AA   \\
@. 0 @>>> \Omega^{0,2} @>d_h>> \Omega^{1,2} @>I >>\mathcal{F}^2 @>>> 0\\
@.  @. @Ad_vAA @Ad_vAA  @A\delta AA   \\
@. 0 @>>> \Omega^{0,1} @>d_h>> \Omega^{1,1}  @>I>>\mathcal{F}^1 @>>> 0 \\
@. @.       @Ad_vAA    @Ad_vAA @A\delta AA   @. \\
0 @>>> \RR @>>> \Omega^{0,0} @>d_h>> \Omega^{1,0} @>\int >> \cF^0  @>>> 0 @.\\
 \end{CD}$} \hspace*{-10mm}\label{Bik}
 \\
 \\
 
\noindent where the surjections $I$ are the so called \emph{interior Euler operators} \cite{A} and $\int\colon  \cV_{loc} \otimes \Omega^1(S^1) \to \cF^0$ is the canonical projection map. In other words, the space of local functionals can be identified with the cohomology group
$$ \cF^0 \cong H^1(\Omega^{(*,0)}, d_h) \ .$$
In fact here all the rows are exact, so the space of local functional forms $\cF^s$ is given by 
$$ \cF^s(J^\infty(X))  = \Omega^{1,s}(J^\infty(X))/d_h(\Omega^{0,s}(J^\infty(X))) \ ,$$
for all $s\in \ZZ_+$. The vertical edge complex
$$ \cF^0 \xrightarrow{\delta} \cF^1\xrightarrow{\delta} \cF^2\xrightarrow{\delta} \dots $$
is called the \emph{variational complex} and  it can be realised algebraically as the complex of an abelian Lie conformal algebra with coefficients in the non-trivial module $\cV_{loc}$, \cite{DSHK}. Furthermore, the first differential
$$ \delta\colon \cF^0 \to \cF^1, \ \ [f]=\int_{S^1} f(t,u_i, u_i^{(1)},\dots,u^{(k)}_i)dt \ \mapsto  \ \delta[f]= \frac{\delta}{\delta u} f $$
is given by the familiar Euler-Lagrange derivative in variational calculus,
$$\frac{\delta}{\delta u} f = \sum_{i\in \cI}\frac{\delta f}{\delta u_i}  \theta_i\wedge dt= \sum_{\substack{i\in \cI \\ m\in \ZZ_+}}\left((-\partial)^m \frac{\partial f} {\partial u_i^{(m)}} \right)\theta_i\wedge dt\ .$$
This expression is well-defined on local functionals due to $\frac{\delta}{\delta u}\circ \partial = 0$. Also notice that it is somewhat misleading to call $\frac{\delta}{\delta u}$ a derivative as it violates the Leibniz rule. 
Applying the functor $\text{Hom}(\bullet \times L\RR,\cF^0)$ to the bottom horizontal complex, one has
$$Hom(\cF^0\times L\RR,\cF^0) \xrightarrow{d} Hom(\Omega^{1,0} \times L\RR,\cF^0) \xrightarrow{d} \cD\xrightarrow{d} Hom(\RR \times L\RR,\cF^0) $$
where  $\cD = Hom(\Omega^{0,0}  \times L\RR,\cF^0)$. This is to be interpreted as a topological $Hom$-functor, but we omit the details since it will not be used in the sequal.

\section{Poisson algebra of local functions}

In the remainder of the paper, we shall restrict our attention to local functions that do not explicitly depend on the loop coordinate. These correspond to local functions on the infinite jet space $J^\infty(LT^*E) \subset J^\infty(X)$,  consisting of jets of smooth maps from the circle into $T^*E$ rather than sections of $X$. By abuse of notation, we shall still denote this subspace by $\cV_{loc}$ and the associated spaces of local functionals and local distributions by $\cF^0$ and $\cD$ respectively.

The twisted symplectic structure on the  phase space $\omega_H \in \Omega^2_\ZZ(T^*LE)$ determines in the usual way a Poisson bracket on smooth functions $f,g\in C^\infty(T^*LE) = C^\infty(J^0(LT^*E))$ by
$$\{f,g\} = \omega_H(X_f, X_g) $$
where the Hamiltonian vector field $X_f$ is defined by $\iota_{X_f}\omega_H = df$. Next one would like to  extend this bracket to the space of local functions inside $ C^\infty(J^\infty(LT^*E)) $ and subsequently to all local functionals $\cF^0$ and local distributions $\cD$.  

For an invariant description of the Poisson bracket on local functionals see for instance \cite{BFLS}. Here we simply note that any skew-symmetric bilinear map $B \colon \cF^1\times \cF^1 \to \Omega^{1,0}$ determines a bracket between local functionals $[f],[g] \in \cF^0$ by 
$$ \{[f],[g]\} = [B(\delta [f], \delta [g])] \ .$$
For a Lie bracket, the Jacobi identity imposes the additional condition
$$ [B(\delta [B(a, b)],c)] + [B(\delta [B(b, c)],a)] + [B(\delta [B(c, a)],b)]=0$$ 
for all $a,b,c\in \cF^1$. Similarly, one obtains a Lie bracket on the space of local distributions $\cD$ by setting
$$ \{J_\phi(f),J_\psi(g)\} = [B(\delta [f\phi], \delta [g\psi])] \ .$$

In a system of local coordinates $\{u_i\}_{j\in\cI}$ on $T^*E$, the Poisson bracket on the coordinate functions can be extended to all real analytic local functions $f,g\in \cV_{loc}$ by repeatedly applying the Leibniz rule and bilinearity, 
\begin{equation}\label{poisson}
 \{f(t),g(s)\} = \sum_{\substack{i,j\in \cI \\ m,n \in \ZZ_+}} \frac{\partial f}{\partial u_i^{(m)}} \frac{\partial g}{\partial u_j^{(n)}} \partial^m_t\partial^n_s \{u_i(t),u_j(s)\} \ .
 \end{equation}
In order to extend the bracket to the space of local functionals $\cF^0$, we shall consider Poisson brackets that are `local' in the loop direction,
$$ \{u_i(t),u_j(s)\} = \omega_H(X_{u_i}, X_{u_j}) \delta(t-s) \ ,$$
where the coordinate functions are naturally identified with distributions. This gives rise to the following Lie bracket on $\cF^0$,
$$ \{[f],[g]\} =\left[ \sum_{i,j\in \cI}\frac{\delta g}{\delta u_j}\omega_H(X_{u_i}, X_{u_j}) \frac{\delta f}{\delta u_i}  \right]$$
after performing integration by parts and eliminating the $\delta$-function. This means that locally the associated map $B \colon \cF^1\times \cF^1 \to \Omega^{1,0}$ is given by
$$ B = \sum_{i,j\in \cI} \omega_H(X_{u_i}, X_{u_j}) \tilde\theta^i\wedge \partial_t \wedge \tilde\theta^j = \sum_{i,j\in \cI} B_{ij} \tilde\theta^i\wedge \partial_t \wedge \tilde\theta^j$$
where $\tilde\theta^i(\theta_j)=\delta^i_j$. Moreover, the Lie bracket on the space of local distributions $\cD$ takes the form 
$$ \{J_\phi(f),J_\psi(g)\} = \left[ \sum_{i,j\in \cI}\frac{\delta (g\psi)}{\delta u_j} B_{ij} \frac{\delta (f\phi)}{\delta u_i}  \right]$$
In Darboux coordinates $(x^i,p_i)$ on $T^*E$, a straightforward calculation shows that the local Poisson bracket induced by the twisted symplectic structure \eqref{symp} is given by
\begin{eqnarray} \label{pb}
\{x^i(t), x^j(s)\}&=& 0, \nonumber\\ 
\{x^i(t), p_j(s)\}&=&\delta^i_j\delta(t-s),  \\
\{p_i(t), p_j(s)\}&=&-\sum_{k=1}^N H_{ijk}(t)\partial x^k(t)\delta(t-s). \nonumber
\end{eqnarray}
 
\section{Algebraic structures on the space of local functions}
In this section we explain how the Poisson bracket on $\cV_{loc}$ determines the structure of a Poisson vertex algebra and a weak Courant-Dorfman algebra on the space of local functions.

\subsection{Poisson vertex algebra} Following \cite{BDSK}, 
we note that the Fourier transform of the Poisson bracket of local functions
\begin{equation}\label{fourier}
 \{f(s)_\lambda g(s)\} = \int_{S^1} e^{\lambda (t-s)}\{f(t),g(s)\}dt \ ,
\end{equation} 
can locally be written  in the form
\begin{equation} \label{l-bracket}
\{f(s)_\lambda g(s)\} =\sum_{\substack{i,j\cI \\ m,n\in\ZZ_+}} \frac{\partial g(s)}{\partial u_j^{(n)}}(\partial + \lambda)^n \{u_{i\partial+\lambda}u_j\}_\rightarrow
(-\partial - \lambda)^m\frac{\partial f(s)}{\partial u_i^{(m)}} \ ,
\end{equation} 
where $\{u_{i\partial+\lambda}u_j\}_\rightarrow$ means that the powers of $\partial+\lambda$ are moved to the right when the bracket is expanded in the Fourier parameter. The bracket \eqref{l-bracket} determines a \emph{Poisson vertex algebra} structure on the space of local functions\footnote{More precisely, it determines a sheaf of Poisson vertex algebras on $T^*E$.}. Namely, $\cV_{loc}$
is an associative, commutative, unital algebra endowed with a derivation $\partial: \cV_{loc}\to \cV_{loc}$ 
and an $\RR$-linear $\lambda$-bracket $\{\cdot_\lambda \cdot\}\colon \cV_{loc} \times \cV_{loc} \to \cV_{loc}\otimes \RR[\lambda]$ satisfying
the following axioms:
\\
 
\begin{itemize} 
\item[]\emph{(Sesquilinearity)} \ \ \  $\{\partial f_\lambda g\} = -\lambda \{ f_\lambda g\} , \ \ \{f_\lambda \partial  g\} = (\partial +\lambda) \{ f_\lambda g\}  $
\item[]\emph{(Skew-symmetry)}\ \  \ $\{ f_\lambda g\} = - \{ g_{-\partial-\lambda}f\} $
\item[]\emph{(Jacobi identity)}  \ \ \ $\{ f_\lambda \{g_\mu h\} \}= \{ g_\lambda \{f_\mu h\}  +\{ f_\lambda g\}_{\lambda +\mu} h\} $
\item[]\emph{(Leibniz identity)} \ \   $\{ f_\lambda gh\} =  \{ f_\lambda g\}h + g \{ f_\lambda h\}$\\
\end{itemize} 
By combining skew-symmetry and the Leibniz rule, one obtains the \emph{right} Leibniz identity
$$\{ fg_\lambda h\} = \{ f_{\partial+\lambda} h\}_\rightarrow g + \{ g_{\partial+\lambda} h\}_\rightarrow f \ .$$
Here we have identified $\Omega^{0,0}\simeq \Omega^{1,0}$ under the natural map
$$ \cV_{loc} \to \cV_{loc}\otimes \Omega^1(S^1), \ \ f\mapsto f dt \ .$$
The differential $d_h$  corresponds to a derivation $\partial$ on $\cV_{loc}$ under this identification. 
In terms of the $\lambda$-bracket, the Lie brackets on the space of local functionals and local distributions take a particularly nice form, 
$$ \{[f],[g]\} = [\{f_\lambda g\}]|_{\lambda =0} \ , \ \ \   \{J_\phi(f),J_\psi(g)\} =  [\{f\phi_\lambda g\psi\}]|_{\lambda =0}  \ .$$
We also remark that any Poisson vertex algebra has an underlying \emph{Lie conformal algebra} structure obtained by suppressing  the multiplicative structure on $\cV_{loc}$ and   thus relaxing the Leibniz identity.

\subsection{Weak Courant-Dorfman algebra}
Let us recall the axioms of a \emph{weak Courant-Dorfman algebra} as introduced in the appendix of \cite{EZ}. Namely it is defined by a quintuple $(V, R, \langle \cdot,\cdot \rangle, \lb\cdot,\cdot\rb, \partial)$, where $V$ and $R$ are vector spaces, $\langle \cdot,\cdot \rangle\colon V\otimes V \to R$ is a symmetric bilinear form, $\lb\cdot,\cdot\rb\colon V\otimes V\to V$ is the Dorfman bracket and $\partial\colon R\to V$ is map, subject to relations

\begin{enumerate} 
\item$\lb f,\lb g,h\rb\rb=\lb \lb f,g\rb , h\rb +\lb g, \lb f,h\rb \rb $
\item $\lb f,g\rb + \lb g,f\rb =\partial  \langle f,g\rangle$
\item $\lb \partial a, f\rb = 0$
\end{enumerate}
for all $f,g,h \in V$ and $a\in R$. By promoting $R$ to a commutative algebra, $V$ to a left $R$-module, $\partial$ to a derivation and imposing the following additional axioms,

\begin{enumerate} 
\item[(4)]  $\lb f, ag\rb= a\lb f,g\rb +\langle f,\partial a\rangle g$
\item[(5)] $\langle f,\partial \langle g,h\rangle\rangle=   \langle \lb f,g\rb ,h\rangle + \langle g,\lb f,h\rb \rangle$
\item[(6)] $\langle \partial a, \partial b\rangle =0$
\end{enumerate} 
one recovers the original definition of a Courant-Dorfman algebra by Roytenberg \cite{R}. Actually, a weaker version of the latter two axioms already follows from $(1)-(3)$,
\begin{enumerate} 
\item[(7)] $\partial\big(\langle f,\partial \langle g,h\rangle\rangle -   \langle \lb f,g\rb ,h\rangle - \langle g,\lb f,h\rb \rangle\big) =0$
\item[(8)] $\partial\big(\langle \partial a, \partial b\rangle\big) =0$
\end{enumerate} 
 The \emph{Courant bracket} is defined as the skew-symmetrized Dorfman bracket,
$$ \lb f,g\rb_C = \frac12\big(\lb f,g\rb-\lb g,f\rb\big) = \lb f,g\rb-\frac 12 \partial \langle f,g\rangle$$
and  satisfies 
$$ \lb f,\lb g,h\rb_C\rb_C+\lb g,\lb h,f\rb_C\rb_C +\lb h, \lb f,g\rb_C \rb_C = d\ \text{Nij}(f,g,h) $$
where the Nijenhuis operator on the right hand side is given by
$$\text{Nij}(f,g,h)= \frac 13\big(  \langle \lb f,g\rb_C ,h\rangle + \langle \lb g,h\rb_C ,f\rangle + \langle \lb h,f\rb_C ,g\rangle\big)\ .$$
The Courant-Dorfman algebra is called \emph{non-degenerate} if the map
$$ V \to Hom_R(V,R), \ \ f \mapsto \langle f, \cdot\rangle$$
is an isomorphism. For such non-degenerate pairings, one can show that axioms $(3),(4)$ and $(6)$ are in fact redundant \cite{R}. Our first result is to elucidate the relationship to Poisson vertex algebras.
\begin{prop} \label{thm1}
A Poisson vertex algebra $(\cV, \{\cdot_\lambda\cdot\}, \partial)$ determines a weak Courant-Dorfman algebra $(\cV, \langle \cdot,\cdot \rangle, \lb\cdot,\cdot\rb, \partial)$ by the following assignment,
\begin{eqnarray*} 
\lb f,g\rb &=& \{f_\lambda g\}|_{\lambda =0} \\
\langle f,g\rangle &=& \sum_{j=1}^\infty \frac{(-\partial)^{j-1}}{j!}\frac{d^j}{d\lambda^j}\Big(\{f_\lambda g\}+\{g_\lambda f\}\Big)\Big|_{\lambda =0}
\end{eqnarray*}
\end{prop}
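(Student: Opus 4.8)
The plan is to verify directly the three defining relations (1)--(3) of a weak Courant--Dorfman algebra for the quintuple $(\cV,\cV,\langle\cdot,\cdot\rangle,\lb\cdot,\cdot\rb,\partial)$, where $R=V=\cV$ and $\partial$ is the derivation of the Poisson vertex algebra; bilinearity of $\lb\cdot,\cdot\rb$ and symmetry/bilinearity of $\langle\cdot,\cdot\rangle$ will be inherited from the $\RR$-linearity of the $\lambda$-bracket. Throughout I would pass to the Taylor expansion $\{f_\lambda g\} = \sum_{n\geq 0}\frac{\lambda^n}{n!}f_{(n)}g$, so the $n$-th products $f_{(n)}g\in\cV$ are the coefficients; since $\{f_\lambda g\}$ is polynomial in $\lambda$ only finitely many are nonzero, which shows the series defining $\langle f,g\rangle$ is a finite sum and hence well defined. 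In this notation $\lb f,g\rb = f_{(0)}g$ and $\langle f,g\rangle = \sum_{j\geq 1}\frac{(-\partial)^{j-1}}{j!}\big(f_{(j)}g+g_{(j)}f\big)$. Two of the three axioms are then immediate. Axiom (3) follows from the first sesquilinearity relation $\{(\partial a)_\lambda f\} = -\lambda\{a_\lambda f\}$, whose value at $\lambda=0$ vanishes, so $\lb\partial a,f\rb=0$. Axiom (1), the left-Leibniz identity for the Dorfman bracket, I would obtain by specializing the Poisson-vertex Jacobi identity $\{f_\lambda\{g_\mu h\}\} = \{g_\mu\{f_\lambda h\}\} + \{\{f_\lambda g\}_{\lambda+\mu}h\}$ to $\lambda=\mu=0$; the only point needing care is that evaluation at $0$ commutes with the bracket operations, which holds because every bracket is polynomial in its spectral parameters with coefficients in $\cV$, so setting the inner parameter to $0$ before the outer one produces $\{f_0\{g_0h\}\}=\lb f,\lb g,h\rb\rb$ and similarly for the two terms on the right.

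The substantive step is axiom (2), namely that the symmetric part $\lb f,g\rb + \lb g,f\rb = f_{(0)}g + g_{(0)}f$ is $\partial$-exact and equals $\partial\langle f,g\rangle$. The input is the skew-symmetry axiom $\{f_\lambda g\} = -\{g_{-\partial-\lambda}f\}$, which I would first translate into $n$-th products by expanding $(-\partial-\lambda)^m$ and comparing coefficients of $\lambda^n$, yielding
$$ f_{(n)}g = (-1)^{n+1}\sum_{k\geq 0}\frac{(-\partial)^k}{k!}\,g_{(n+k)}f \ . $$
Its $n=0$ instance reads $f_{(0)}g = -\sum_{k\geq 0}\frac{(-\partial)^k}{k!}g_{(k)}f$, and after cancelling the $k=0$ term against $g_{(0)}f$ and factoring out one $\partial$ it rearranges to
$$ f_{(0)}g + g_{(0)}f = \partial\Big(\sum_{k\geq 1}\tfrac{(-\partial)^{k-1}}{k!}\,g_{(k)}f\Big)\ , $$
which exhibits the symmetric part as $\partial$ of the claimed pairing.

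To close the argument I would then confirm that $\langle\cdot,\cdot\rangle$ is genuinely symmetric, i.e. that $\sum_{k\geq 1}\frac{(-\partial)^{k-1}}{k!}f_{(k)}g = \sum_{k\geq 1}\frac{(-\partial)^{k-1}}{k!}g_{(k)}f$. Substituting the skew-symmetry expansion of $f_{(k)}g$ into the left-hand side and reindexing by $n=k+j$ collapses the double sum through the binomial identity $\sum_{k=1}^{n}\binom{n}{k}(-1)^{n-k} = (-1)^{n+1}$, recovering the right-hand side. This combinatorial collapse, together with tracking the normalization of the definition of $\langle\cdot,\cdot\rangle$ against the symmetrized expression $\{f_\lambda g\}+\{g_\lambda f\}$ (the two one-sided sums agree by the symmetry just proved, so the symmetrized version differs from each by a constant factor that must be reconciled with the normalization in axiom (2)), is the main obstacle and the only place where careful bookkeeping is required; all remaining structure -- bilinearity of $\lb\cdot,\cdot\rb$ and of $\langle\cdot,\cdot\rangle$ and the compatibility of $\partial$ as the connecting map $R\to V$ -- follows directly from the axioms of the Poisson vertex algebra.
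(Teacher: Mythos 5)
Your proposal is correct and its skeleton coincides with the paper's own proof: expansion of the $\lambda$-bracket into $j$-th products, axiom (1) from the Jacobi (equivalently Borcherds) identity at vanishing spectral parameters, axiom (3) from sesquilinearity at $\lambda=0$, and axiom (2) from the $\lambda$-expansion of skew-symmetry. Where you go beyond the paper is exactly where the paper's proof glosses over a point, and this is the most valuable part of your write-up. The paper derives, just as you do,
$$\lb f,g\rb + \lb g,f\rb \;=\; \partial\Bigl(\sum_{k\geq 1}\tfrac{(-\partial)^{k-1}}{k!}\,f_{(k)}g\Bigr),$$
and then simply declares the right-hand side to equal $\partial\langle f,g\rangle$, never comparing this one-sided sum with the symmetrized sum used to define $\langle f,g\rangle$ in the statement. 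Your binomial-identity collapse settles that comparison: the two one-sided sums are \emph{equal}, hence the symmetrized pairing of the Proposition is exactly twice either of them, so with the stated normalization one gets $\partial\langle f,g\rangle = 2\bigl(\lb f,g\rb+\lb g,f\rb\bigr)$. In other words, the ``reconciliation'' you flag at the end cannot be carried out as stated: axiom (2) with the Proposition's pairing fails by a factor of $2$ (concretely, for the Virasoro-type bracket $\{u_\lambda u\}=(2\lambda+\partial)u$ one finds $\lb u,u\rb+\lb u,u\rb = 2\partial u$ while $\partial\langle u,u\rangle = 4\partial u$). The fix is to put a factor $\tfrac12$ in front of the pairing, or equivalently to drop the symmetrization, which your lemma shows is automatic; since $\langle\cdot,\cdot\rangle$ enters only axiom (2), nothing else in the paper is affected. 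So your argument is sound and in fact more careful than the paper's; the discrepancy you isolated is a normalization slip in the Proposition itself, not a gap in your proof.
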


\noindent \begin{proof} We need to check that  axioms (1), (2) and (3) are satisfied, where in this case the vector spaces $V$ and $R$ are both equal to $\cV$. First it is  convenient to expand the $\lambda$-bracket in so called $j$-th products \cite{K},
$$ \{f_\lambda g\} = \sum_{j\in \ZZ_+} (f_{(j)}g) \frac{\lambda^j}{j!}$$
where $ f_{(j)}g := \frac{d^j}{d\lambda^j}\{f_\lambda g\}\big|_{\lambda=0}$ for $j\in\ZZ_+$. In other words the $\lambda$-bracket
is a generating function for the non-negative $j$-th products. For negative values these products are extended by using the derivation $\partial$, 
$$ f_{(-j-1)}g := (\partial^j f)g, \ \ \ j\in\ZZ_+\ ,$$
so in particular $f_{(-1)}g = f g$ is the associative commutative product on $\cV$. The properties of the $\lambda$-bracket are then nicely encoded in the \emph{Borcherds identity}
\begin{equation}\label{borcherds}
  \sum_{j\in\ZZ_+} (-1)^j\binom{n}{j}\Big(f_{(m+n-j)}(g_{(p+j)}h)-(-1)^n g_{(n+p-j)}(f_{(m+j)}h)\Big) =
\end{equation} 
$$= \sum_{j\in\ZZ_+}\binom{m}{j}\Big(f_{(n+j)}g)_{(m+p-j)}h) \ $$ 
for all $m,n,p\in \ZZ$. It should be emphasised that the locality condition on the functions ensures that all the sums are finite. Furthermore, we note that 
$$\lb f,g\rb = f_{(0)}g$$
while 
$$\lb g,f\rb = -\{f_{-\lambda-\partial} g\}|_{\lambda =0}  =- \sum_{j\in \ZZ_+} \frac{(-\partial)^j}{j!} (f_{(j)}g) \ .$$ 
Together these imply axiom $(2)$,
$$\lb f,g\rb +\lb g,f\rb  = \sum_{j=1}^\infty \frac{(-1)^{j+1}}{j!} \partial^j(f_{(j)}g) =  \partial  \langle f,g\rangle \ .$$
The Jacobi identity (1) for the Dorfman bracket follows immediately from Borcherds identity for the values $m=n=p=0$. In particular, the $0$-th product is a derivation of all $j$-th products and hence of the bilinear form. Finally, the third axiom
$$\lb \partial f,g\rb=0$$
follows by the translation covariance of Poisson vertex algebras,
$$ (\partial f)_{(j)}g= -j f_{(j-1)}g$$ 
for $j=0$, which in turn can be derived from the Borcherds identity, or equivalently by the sesquilinearity of the $\lambda$-bracket.
\end{proof}
\begin{remark} The Proposition actually holds for any Lie conformal algebra, since the multiplicative structure on $\cV$ is not being used, and consequently for any vertex algebra. We shall return to the definition of vertex algebras in Section \ref{vertex}. 

It has been brought to our attention that a similar result has been obtained independently by J. Ekstrand in his PhD thesis \cite{E}, which includes a further refinement. Namely, since $\cV$ is a commutative differential algebra and carries a natural left $\cV$-module structure, it would become a Courant-Dorfman algebra if the remaining axioms (4), (5) and (6) were satisfied. In \cite{E} Theorem 4.1, a one-to-one correspondence is established between Courant-Dorfman algebras and $\mathbb Z_2$-graded Poisson vertex algebras, $\cV= \cV_0\oplus \cV_1$, where the grading encodes the \emph{conformal weight} of the elements. Under the identification $\cV_0 = R$ and $\cV_1 = V$, axioms (4) and (5) follow from the Leibniz identity and the Jacobi identity respectively, and (6) always holds by the sesquilinearity of the $\lambda$-bracket. 
\end{remark}

We conclude that the $\lambda$-bracket \eqref{l-bracket} determines a weak Courant-Dorfman algebra structure on the space of local functions $\cV_{loc}$.  On the other hand it was noted in \cite{EZ} that by rewriting the Lie bracket between local distributions in the following form,
\begin{equation} \label{current}
\{J_\phi(f),J_\psi(g)\} = J_{\phi \psi}(\lb f,g\rb) + c(f,g; \phi, \psi) \ ,
\end{equation}
this also induces a weak Courant-Dorfman algebra structure on $\cV_{loc}$. The anomalous Schwinger term is given by  
\begin{equation} \label{schwinger}
c(f,g; \phi, \psi) = \sum_{j=1}^\infty \int_{S^1} \psi(s) (\partial_t^j\phi)(s) C_j(f,g) ds
\end{equation}
for some functions $C_j\colon\cV_{loc}\times \cV_{loc} \to \cV_{loc}$ and is subject to the normalization $c(f,g; 1, \psi)=0$. Below we show that these are in fact the same structures and we obtain thereby an explicit expression for the Dorfman bracket and the functions $C_j$ in \cite{EZ} in the language of Poisson vertex algebras. 

\begin{prop} \label{prop1}
The Dorfman bracket on $\cV_{loc}$ defined by \eqref{current} coincides with the $\lambda$-bracket \eqref{l-bracket} evaluated at $\lambda=0$,
$$\lb f,g\rb = \{f_\lambda g\}\big|_{\lambda=0} = \sum_{i,j\in\cI, n\in\ZZ_+} \frac{\partial g}{\partial u_j^{(n)}}\partial^n \{u_{i\partial}u_j\}_\rightarrow
\frac{\delta f}{\delta u_i}
$$ 
and the functions $C_j$ in \eqref{schwinger}  correspond to the higher derivatives of the $\lambda$-bracket,
$$ C_j(f,g) = \frac{1}{j!} \frac{d^j}{d\lambda^j}\{f_\lambda g\}\big|_{\lambda=0}= \frac{f_{(j)}g}{j!}, \ \ j\geq 1 \ .$$ 

\end{prop}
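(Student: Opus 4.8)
The plan is to compute the bracket of local distributions directly from its $\lambda$-bracket presentation and then read off the two assertions by matching against \eqref{current} and \eqref{schwinger}. Recall from the previous section that the Lie bracket on $\cD$ is expressed through the $\lambda$-bracket as $\{J_\phi(f),J_\psi(g)\} = [\{f\phi_\lambda g\psi\}]|_{\lambda=0}$, so everything reduces to expanding $\{f\phi_\lambda g\psi\}$. The essential input is that the test functions $\phi,\psi\in L\RR$ depend only on the loop coordinate and not on the dynamical fields $u_i$, so that $\{\phi_\lambda h\} = \{h_\lambda\phi\} = 0$ for every $h\in\cV_{loc}$, while the derivation $\partial$ still acts on them as $\partial_t$.

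First I would peel the test functions off one slot at a time. Applying the right Leibniz identity to the first argument and using $\{\phi_{\partial+\lambda}(g\psi)\}_\rightarrow = 0$ gives $\{f\phi_\lambda g\psi\} = \{f_{\partial+\lambda}(g\psi)\}_\rightarrow\phi$; applying the ordinary Leibniz identity to the second argument and using $\{f_\lambda\psi\}=0$ gives $\{f_\lambda(g\psi)\} = \{f_\lambda g\}\psi$. Expanding in $j$-th products, $\{f_\lambda g\} = \sum_{j\in\ZZ_+}(f_{(j)}g)\lambda^j/j!$, and substituting $\lambda\mapsto\partial+\lambda$ acting to the right on $\phi$, I obtain
$$\{f\phi_\lambda g\psi\} = \sum_{j\in\ZZ_+}(f_{(j)}g)\,\psi\,\frac{(\partial+\lambda)^j\phi}{j!} \ .$$

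Next I would set $\lambda=0$, so that $(\partial+\lambda)^j\phi$ becomes $\partial_t^j\phi$, and take the local functional class, i.e. integrate over $S^1$:
$$\{J_\phi(f),J_\psi(g)\} = \int_{S^1}(f_{(0)}g)(s)\,\phi(s)\psi(s)\,ds + \sum_{j\geq 1}\int_{S^1}\psi(s)\,(\partial_t^j\phi)(s)\,\frac{(f_{(j)}g)(s)}{j!}\,ds \ .$$
The first summand is exactly $J_{\phi\psi}(f_{(0)}g)$, and since $f_{(0)}g = \{f_\lambda g\}|_{\lambda=0} = \lb f,g\rb$ this isolates the Dorfman bracket term in \eqref{current}; the local coordinate formula for $\lb f,g\rb$ then follows by setting $\lambda=0$ in \eqref{l-bracket} together with the identity $\sum_{m}(-\partial)^m\tfrac{\partial f}{\partial u_i^{(m)}} = \tfrac{\delta f}{\delta u_i}$. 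Comparing the remaining sum with \eqref{schwinger} reads off $C_j(f,g) = (f_{(j)}g)/j! = \tfrac{1}{j!}\tfrac{d^j}{d\lambda^j}\{f_\lambda g\}|_{\lambda=0}$ for $j\geq1$, and the normalization $c(f,g;1,\psi)=0$ is immediate since $\partial_t^j 1 = 0$ for $j\geq 1$.

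I expect the main obstacle to lie entirely in the bookkeeping around the arrow notation and the status of the test functions: one must justify carefully that $\phi,\psi$ have vanishing $\lambda$-bracket with all of $\cV_{loc}$ yet are still differentiated by $\partial$, and keep straight that in $\{f_{\partial+\lambda}(g\psi)\}_\rightarrow\phi$ the operator $\partial+\lambda$ acts only on $\phi$ while $\psi$ and the $j$-th products $f_{(j)}g$ remain as coefficients. Once the two Leibniz identities are applied in the correct order and the $\lambda\to0$ specialization is performed, the two displayed identifications drop out by matching the $j=0$ and $j\geq1$ terms respectively.
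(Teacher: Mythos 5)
Your proof is correct, but it takes a genuinely different route from the paper's. The paper never invokes the PVA axioms in its argument: it writes out $\{J_\phi(f),J_\psi(g)\} = \int_{S^1\times S^1}\phi(t)\psi(s)\{f(t),g(s)\}\,dt\,ds$, inserts the Taylor expansion $\phi(t)=\sum_{j}(\partial_t^j\phi)(s)\frac{(t-s)^j}{j!}$, reads off $\lb f,g\rb(s)=\int_{S^1}\{f(t),g(s)\}\,dt$ and $C_j(f,g)(s)=\int_{S^1}\frac{(t-s)^j}{j!}\{f(t),g(s)\}\,dt$, and then compares these with the expansion of the exponential $e^{\lambda(t-s)}$ in the Fourier transform \eqref{fourier} defining the $\lambda$-bracket. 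You instead start from the identity $\{J_\phi(f),J_\psi(g)\}=[\{f\phi_\lambda g\psi\}]|_{\lambda=0}$ stated in Section 7 and peel off the test functions algebraically with the left and right Leibniz identities; this makes the identification $C_j(f,g)=f_{(j)}g/j!$ a structural consequence of the PVA axioms and explains transparently why the operator $(\partial+\lambda)^j$ lands on $\phi$ alone. The price is exactly the point you flag yourself: the PVA axioms are formulated on $\cV_{loc}$, which after Section 6 excludes explicit dependence on the loop coordinate, whereas $f\phi$ and $g\psi$ do depend on $t$ through the test functions. Your right-Leibniz step is therefore applied in a larger differential algebra, and one must check (as in Barakat--De Sole--Kac, where explicit dependence on the base variable is allowed) that the master formula \eqref{l-bracket} still defines a Poisson vertex algebra there, with $\{\phi_\lambda h\}=\{h_\lambda \phi\}=0$ because every $\partial/\partial u_i^{(m)}$ annihilates $\phi$, while $\partial\phi=\partial_t\phi$. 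The paper's Taylor-expansion argument is more elementary and sidesteps this extension entirely, since it works directly with the distributional kernel $\{f(t),g(s)\}$ and the definitions \eqref{fourier}, \eqref{current}, \eqref{schwinger}; both arguments conclude by the same term-by-term matching of the $j=0$ and $j\geq 1$ contributions.
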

\begin{proof} The assertion follows by a simple Taylor expansion argument. First let us write out the left hand side of \eqref{current},
$$\{J_\phi(f),J_\psi(g)\}  = \int_{S^1\times S^1} \phi(t)\psi(s) \{f(t),g(s)\}dt ds \ .$$
Inserting into this the Taylor expansion of the test function,
$$\phi(t) = \sum_{j\in\ZZ_+}(\partial_t^j\phi)(s)\frac{(t-s)^j}{j!}$$  
we read off the following expression for the Dorfman bracket
$$ \lb f,g\rb(s) = \int_{S^1}\{f(t),g(s)\} dt $$ 
and for the functions
$$ C_j(f,g)(s) =  \int_{S^1} \frac{(t-s)^j}{j!}\{f(t),g(s)\} dt \ .$$
On the other hand,  by expanding the exponential in the Fourier transform \eqref{fourier},
$$ \{f(s)_\lambda g(s)\} = \int_{S^1} e^{\lambda (t-s)}\{f(t),g(s)\}dt = \sum_{j\in\ZZ_+} \frac{\lambda^j}{j!}\int_{S^1} (t-s)^j\{f(t),g(s)\} dt $$
the result follows by a simple comparison.  
\end{proof}
\begin{remark} Notice that since the anomaly terms $C_j$ are determined by the $j$-th products, they are all intertwined with each other and with the Dorfman bracket via relations dictated by the Borcherds identity \eqref{borcherds}.
\end{remark}
\section{T-duality of Alekseev-Strobl current algebras}
\label{sec:duality}
\subsection{Alekseev-Strobl currents}
In \cite{AS} the authors introduced a special class of local distributions which are parametrised by smooth sections of the generalised tangent bundle $TE\oplus T^*E$. Recall that the generalised tangent bundle is  equipped with a natural Courant-Dorfman algebra structure $(V, R, \langle \cdot,\cdot \rangle, \lb\cdot,\cdot\rb_H, \partial)$, where $V = \Gamma(TE\oplus T^*E)$, $R=C^\infty(E)$,  $\partial = d$ is the usual de Rham differential,  $\langle \cdot,\cdot \rangle$ is defined by the canonical non-degenerate pairing of $TE$ and $T^*E$,
$$ \langle (\xi,\alpha),(\chi,\beta) \rangle = \frac12(\iota_\chi \alpha + \iota_\xi\beta)$$
and the Dorfman bracket is given by
$$\lb (\xi,\alpha),(\chi,\beta)\rb_H = ([\xi,\chi], \CL_\xi\beta - \iota_\chi d\alpha + \iota_\xi \iota_\chi H) $$
where $[\cdot,\cdot]$ is the Lie bracket on vector fields and $H\in \Omega^3_\ZZ(E)$. In other words this is an  \emph{exact Courant algebroid} with the anchor map $TE\oplus T^*E\to TE$  given by the natural projection \cite{LWX}.
More importantly, this Courant-Dorfman algebra structure carries over \emph{pointwise} to sections of the looped generalized tangent bundle $L(TE\oplus T^*E) \to LE$. Using the natural inclusion \eqref{inclusion} we have, for every $p\in T^*_xLE  =\Gamma(S^1, x^*T^*E)$ in the fibre over $x\in LE$, a map 
$$LE \to LTE\oplus LT^*E, \ x \mapsto (\partial x, p) \ .$$ 
\begin{defn}
An \emph{Alekseev-Strobl function} is  a local function in $C^\infty(J^1(LT^*E))$ of the form
\begin{equation}\label{AS} 
f_{(\xi,\alpha)}(\partial x, p) =  \langle (\xi, \alpha),(\partial x, p)\rangle = \iota_{\partial x} \alpha + \iota_\xi p
\end{equation}
where $(\xi, \alpha) \in \Gamma(TE\oplus T^*E)$ is extended pointwise to a section of $LTE\oplus LT^*E$. Let $\cV_{AS}\subset \cV_{loc}$ denote the Poisson subalgebra generated by the Alekseev-Strobl  functions.
\end{defn}
The associated Lie subalgebra of local distributions will be denoted by $\cD_{AS} \subset \cD$ and its elements referred to as \emph{Alekseev-Strobl currents}, since special cases include the current algebras of the Wess-Zumino-Witten model and the Poisson $\sigma$-model. Similarly we write $\cF_{AS}\subset \cF^0$ for the Lie subalgebra of \emph{Alekseev-Strobl functionals}.

An interesting observation made in \cite{AS} is that there is a natural correspondence between the Courant-Dorfman algebra structure on $\cV_{AS}$ determined by the twisted symplectic form $\omega_H \in \Omega^2_\ZZ(T^*LE)$ and the $H$-twisted Courant algebroid $\Gamma(TE\oplus T^*E)$ described above. Below we rederive this result in the language of Poisson vertex algebras.
\begin{proposition}{(\cite{AS})} \label{prop2}
The Courant-Dorfman algebra structure on $\cV_{AS}$, induced by the Poisson bracket associated to the twisted symplectic form \eqref{symp}, satisfies
\begin{eqnarray*} 
\lb f_{(\xi,\alpha)} , f_{(\chi,\beta)}\rb &=& - f_{\lb (\xi,\alpha), (\chi,\beta) \rb_H} \\
\langle f_{(\xi,\alpha)} , f_{(\chi,\beta)} \rangle &=& 2 \langle (\xi,\alpha),(\chi,\beta)  \rangle
\end{eqnarray*}
\end{proposition}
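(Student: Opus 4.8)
The plan is to reduce the statement to a computation in local Darboux coordinates $(x^i,p_i)$ and to read everything off from the master formula \eqref{l-bracket}. First I would Fourier transform the canonical relations \eqref{pb} to record the fundamental $\lambda$-brackets of the generators, $\{x^i{}_\lambda x^j\}=0$, $\{x^i{}_\lambda p_j\}=\delta^i_j$ and $\{p_i{}_\lambda p_j\}=-\sum_k H_{ijk}\,\partial x^k$, with $\{p_i{}_\lambda x^j\}=-\delta^j_i$ forced by skew-symmetry; the total antisymmetry of $H$ is exactly what makes the last relation compatible with skew-symmetry. In these coordinates the Alekseev-Strobl function \eqref{AS} reads $f_{(\xi,\alpha)}=\alpha_i(x)\,\partial x^i+\xi^i(x)\,p_i$, a local function of jet-order one that is linear in the first jets.

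The core step is to expand $\{f_{(\xi,\alpha)}{}_\lambda f_{(\chi,\beta)}\}$ via \eqref{l-bracket}, that is by repeated use of sesquilinearity and the left and right Leibniz identities, reducing each term to the three fundamental brackets above. The key structural observation is that an explicit power of $\lambda$ is produced only when a factor $\partial x$ is paired against the opposite entry, through $\{\partial x^i{}_\lambda\,\cdot\,\}=-\lambda\{x^i{}_\lambda\,\cdot\,\}$ or $\{\,\cdot\,{}_\lambda\partial x^j\}=(\partial+\lambda)\{\,\cdot\,{}_\lambda x^j\}$; two such factors would require the vanishing bracket $\{x^i{}_\lambda x^j\}$, so $\{f_{(\xi,\alpha)}{}_\lambda f_{(\chi,\beta)}\}$ is a polynomial in $\lambda$ of degree at most one. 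I would organise the expansion into the $p$--$p$ block, in which $\{p_i{}_\lambda p_j\}$ contributes the $H$-dependent term, and the mixed blocks in which the $x$-dependence of the coefficients $\xi,\chi,\alpha,\beta$ is differentiated through the canonical bracket $\{x{}_\lambda p\}$.

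Evaluating at $\lambda=0$ gives the Dorfman bracket $\lb f_{(\xi,\alpha)},f_{(\chi,\beta)}\rb$ prescribed in Proposition \ref{thm1}. I would separate the resulting local function into its $p_k$-linear and $\partial x^k$-linear parts and recognise it as $-f_{(\eta,\gamma)}$ for a definite $(\eta,\gamma)$: the $p_k$-coefficient reassembles the Lie bracket $[\xi,\chi]$, while the $\partial x^k$-coefficient, after using the Cartan identity $\CL_\xi\beta=d\iota_\xi\beta+\iota_\xi d\beta$ and the antisymmetry of $H$, reorganises into $\CL_\xi\beta-\iota_\chi d\alpha+\iota_\xi\iota_\chi H$; matching against the exact Courant algebroid bracket $\lb(\xi,\alpha),(\chi,\beta)\rb_H$ then yields the first identity. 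For the pairing, degree one in $\lambda$ means only the $j=1$ term of the formula in Proposition \ref{thm1} contributes, so $\langle f_{(\xi,\alpha)},f_{(\chi,\beta)}\rangle$ is governed by the $\lambda$-linear part of the bracket, which collapses to $\iota_\chi\alpha+\iota_\xi\beta=2\langle(\xi,\alpha),(\chi,\beta)\rangle$.

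I expect the principal obstacle to be the sign and normalization bookkeeping, not anything conceptual. The delicate points are: producing the $H$-term of the Dorfman bracket with precisely the sign dictated by $\lb\cdot,\cdot\rb_H$, which rests on correctly antisymmetrizing $\sum\xi^i\chi^j H_{ijk}$ against the component formula for $\iota_\xi\iota_\chi H$; transporting the derivation $\partial$ past the $x$-dependent coefficients when applying sesquilinearity, so that the terms $\xi^j\partial_j\beta_i+\beta_j\partial_i\xi^j$ and $\chi^j(\partial_j\alpha_i-\partial_i\alpha_j)$ assemble into $\CL_\xi\beta$ and $\iota_\chi d\alpha$; and tracking the sign from the Fourier transform of $\partial_t\delta(t-s)$ in the $\lambda$-linear terms, which together with the factor $\tfrac12$ in the definition of the pairing on $TE\oplus T^*E$ fixes the normalization $2$ in the second identity. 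Once these are in hand, the identification with the $H$-twisted Courant-Dorfman structure is immediate.
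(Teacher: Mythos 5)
Your proposal is correct and takes essentially the same route as the paper: the paper's proof likewise works in local Darboux coordinates, feeds the canonical relations \eqref{pb} into the bracket of two Alekseev--Strobl functions, identifies the $\lambda^0$ part (the $0$-th product) with $-f_{\lb(\xi,\alpha),(\chi,\beta)\rb_H}$ and the $\lambda$-linear part (the $1$-st product) with $2\langle(\xi,\alpha),(\chi,\beta)\rangle$, notes that all higher products vanish, and concludes by Proposition \ref{thm1}. The only difference is organizational rather than conceptual: the paper computes the $j$-th products as distributional integrals $\int_{S^1}(t-s)^j\{f(t),g(s)\}\,dt$ via \eqref{poisson}, where your degree-$\leq 1$ observation appears instead as the vanishing of $(t-s)^j$ against $\delta(t-s)$ and its derivative for $j\geq 2$, whereas you carry out the same computation on the Fourier side using \eqref{l-bracket} and the sesquilinearity/Leibniz axioms.
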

\begin{proof} Choosing local Darboux coordinates $u_i=(x^i,p_i)$ on $T^*E$, we use formula \eqref{poisson} to compute the $j$-th products between the Alekseev-Strobl functions. First, we have
\begin{eqnarray*}
f_{(\xi,\alpha)(0)} f_{(\chi,\beta)}(s)&=& \int_{S^1}\{f_{(\xi,\alpha)}(t) , f_{(\chi,\beta)}(s)\}dt\\
&=& \sum_{\substack{i,j\in \cI \\ m,n \in \ZZ_+}} \int_{S^1}\frac{\partial f_{(\xi,\alpha)}(t)}{\partial u_i^{(m)}} \frac{\partial f_{(\chi,\beta)}(s)}{\partial u_j^{(n)}} \partial^m_t\partial^n_s \{u_i(t),u_j(s)\} dt \ .\\ 
\end{eqnarray*}
By \eqref{AS} it follows that the only non-vanishing derivatives are
$$ \frac{\partial f_{(\xi,\alpha)} }{\partial x^i}= \sum_{k=1}^N \frac{\partial\alpha_k(x)}{\partial x^i}\partial x^k + \frac{\partial\xi_k(x)}{\partial x^i}p_k, \ \ \ \ \frac{\partial f_{(\xi,\alpha)} }{\partial p_i} = \xi^i(x), \ \ \ \ \frac{\partial f_{(\xi,\alpha)} }{\partial (\partial x^i)}=\alpha_i(x) \ ,$$
and similarly for $f_{(\chi,\beta)}\in\cV_{AS}$. Inserting these into the right hand side above and using the brackets \eqref{pb}, a tedious but straightforward calculation yields
\begin{eqnarray*} 
f_{(\xi,\alpha)(0)} f_{(\chi,\beta)}&=& \sum_{i,k=1}^N \left(\chi^i\frac{\partial \xi^k}{\partial x^i}-\xi^i\frac{\partial \chi^k}{\partial x^i}\right)p_k\\
&&+\sum_{i,k=1}^N\left(\chi^i\frac{\partial \alpha_k}{\partial x^i}-\xi^i\frac{\partial \beta_k}{\partial x^i} - \frac{\partial (\alpha_i\chi^i)}{\partial x^k}-\sum_{j=1}^N\xi^i\chi^jH_{ijk}\right) \partial x^k\\
&=& - f_{\lb (\xi,\alpha), (\chi,\beta) \rb_H}
\end{eqnarray*}
Next we have
$$ f_{(\xi,\alpha)(1)} f_{(\chi,\beta)}(s)= \int_{S^1}(t-s)\{f_{(\xi,\alpha)}(t) , f_{(\chi,\beta)}(s)\}dt \ .$$
Repeating the same procedure, this time only a few terms survive due to the presence of $(t-s)\delta(t-s)$ in the integrand and we are left with
$$ f_{(\xi,\alpha)(1)} f_{(\chi,\beta)}= \sum_{k=1}^N \alpha_k\chi^k + \beta_k\xi^k = 2 \langle (\xi,\alpha),(\chi,\beta)  \rangle$$
It follows that the $1$-st product is actually symmetric in this case. Similarly it is not hard to see that the term $(t-s)^j\delta(t-s)$ for $j\geq 2$ leads to the vanishing of all higher $j$-th products.

Now the knowledge of all  non-negative $j$-th products on $\cV_{AS}$ amounts to knowing the $\lambda$-bracket and the result  follows by applying Proposition \ref{thm1}. 
\end{proof}
\begin{remark} Notice that we have omitted the prefix `weak'. This is because the axioms of the Courant algebroid $\Gamma(LTE \oplus LT^*E)$ translate into those of a Courant-Dorfman algebra on $\cV_{AS}$ under the above correspondence. 
  \end{remark}
  
\subsection{T-duality transform} 
In this section we show that the T-duality relations introduced in \cite{BEM1, BEM2} and \cite{CG} establish an isomorphism of the Alekseev-Strobl algebras. 

Let $E\xrightarrow{\TT}M$ be a principal circle bundle equipped with a connection 1-form $A\in \Omega^1(E,\RR)$ and background flux $H\in \Omega_\ZZ^3(E)$, which we can assume without loss of generality is $\TT$-invariant. The curvature  
$F=dA$ provides a real representative of the first Chern class $c_1(E)$ in the de Rham cohomology of $M$. 
Define $\widehat E\xrightarrow{\widehat\TT}M$ to  be the principal circle bundle with $c_1(\widehat E)$ represented by the 2-form $\int_\TT H$ and choose a connection $\hat A$ such that the curvature $\hat F = d\hat A$
has the property that $\hat F = \int_\TT H. $ This is always possible by geometric pre-quantization \cite{K}. Consider the correspondence 
space commutative diagram,
$$
\qquad\xymatrix @=4pc @ur 
{ (E,H) \ar[d]_\pi & E\times_M \widehat E \ar[d]^{\hat p} \ar[l]_p \\ M & (\widehat{E},\hat{H}) \ar[l]^{\hat \pi}}
$$
Then
$$
  H  = A\wedge \hat F - \pi^* \Omega \,,
$$
for some $\Omega\in \Omega^3(M)$,   while the T-dual $\hat H$ is the $\widehat\TT$-invariant integral 3-form  given
by
$$
  \hat H  = F\wedge \hat A -  \hat \pi^*\Omega \,.
$$
Disregarding torsion, the T-dual flux is uniquely determined by  the relation
\begin{equation}\label{corr} 
p^*H - \hat{p}^*\hat H = d \cF\ ,
\end{equation}
where $\cF = p^*A\wedge  \hat{p}^*\hat A \in \Omega^2(E\times_M \widehat E)$.
The following transform establishes an isomorphism of twisted de Rham complexes
\begin{align*}
T\colon & (\Omega^\bullet(E)^\TT, d_H)
\cong (\Omega^{\bullet-1}(\widehat E)^{\widehat \TT}, d_{\hat H})\\ 
 \alpha &\mapsto T(\alpha) = \int_\TT e^{\cF}\wedge p^* \alpha
\end{align*}
where $d_H = d-H\wedge$ is the twisted differential, and in particular one has 
\begin{equation}\label{intertwine} 
T\circ d_H = - d_{\hat H}\circ T \ .
\end{equation}
The map $T$ is the smooth analogue of the Fourier-Mukai transform \cite{Mukai} in the case when the flux $H=0$. It was generalized 
by Hori \cite{Hori} to the case when the flux is exact $H=dB$ and defined in general in \cite{BEM1, BEM2}.

The connection $A$ on the circle bundle $E$ determines a looped connection on $LE$ which we shall also denote by $A$. This furnishes us with a splitting
$$ LTE\oplus LT^*E = LTM\oplus L\ft\oplus LT^*M\oplus L\ft^* $$
and similarly on the T-dual manifold. Any element of $\Gamma(LTE\oplus LT^*E)$ is thus of the form $(\xi, \xi_w, \alpha, \alpha_p) $, 
where $\xi$ is horizontal, $\alpha$ is basic and $(\xi_w, \alpha_p)\in L\ft\oplus L\ft^*$. Here we have identified $LT\TT \cong L\ft$ and $LT^*\TT \cong L\ft^*$,  corresponding to the `winding' and `momentum' components respectively. T-duality is an exchange of these quantities, so we define a map
$$\Psi \colon \Gamma(LTE\oplus LT^*E) \to\Gamma(LT\widehat E\oplus LT^*\widehat E)$$
to be the interchange of these entries,
$$ (\xi, \xi_w, \alpha, \alpha_p) \mapsto (\xi, \alpha_p, \alpha, \xi_w)  \ .$$
The induced map  on the space of Alekseev-Strobl functions $\Psi: \cV_{AS} \to \widehat \cV_{AS}$ is thus
$$ f_{(\xi, \xi_w, \alpha, \alpha_p)} \mapsto  f_{(\xi, \alpha_p, \alpha, \xi_w)}  \ .$$
Let $\cV_{AS}^\TT \subset \cV_{AS}$ denote the subalgebra of Alekseev-Strobl functions parametrized by the $\TT$-invariant sections of $TE\oplus T^*E$. We write $\cF_{AS}^\TT$ and $\cD_{AS}^\TT$ for the associated Lie subalgebras of invariant Alekseev-Strobl  functionals and currents respectively.
\begin{thm}\label{T-duality}
 The map $\Psi$ determines an isomorphism of the Courant-Dorfman algebras of invariant Alekseev-Strobl functions,
 $$ (\cV_{AS}^\TT, \langle \cdot,\cdot \rangle, \lb\cdot,\cdot\rb, \partial) \cong  (\widehat \cV_{AS}^{\widehat\TT}, \langle \cdot,\cdot \rangle, \lb\cdot,\cdot\rb, \partial)$$
 \end{thm}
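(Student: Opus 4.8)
The plan is to reduce the statement, through Proposition \ref{prop2}, to a finite-dimensional assertion about the generalised tangent bundle, which is then recognisable as the Cavalcanti--Gualtieri isomorphism of Courant algebroids. Proposition \ref{prop2} computes the $\lambda$-bracket of Alekseev--Strobl functions, and hence, by Proposition \ref{thm1}, both the symmetric form $\langle\cdot,\cdot\rangle$ and the Dorfman bracket $\lb\cdot,\cdot\rb$ on the generators $f_{(\xi,\alpha)}$, in terms of the $H$-twisted Courant--Dorfman structure on $\Gamma(TE\oplus T^*E)$; explicitly $\{f_{(\xi,\alpha)\,\lambda}f_{(\chi,\beta)}\}=-f_{\lb(\xi,\alpha),(\chi,\beta)\rb_H}+2\lambda\langle(\xi,\alpha),(\chi,\beta)\rangle$. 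Using the connection $A$ to split an invariant section as $(\xi,\xi_w,\alpha,\alpha_p)$, with $\xi$ horizontal, $\alpha$ basic and $\xi_w,\alpha_p$ the vertical and connection components, the map $\Psi$ is the function-level image of the section-level interchange $\psi\colon(\xi,\xi_w,\alpha,\alpha_p)\mapsto(\xi,\alpha_p,\alpha,\xi_w)$. Since the $\lambda$-bracket on all of $\cV_{AS}^\TT$ is determined, via sesquilinearity, the Leibniz identity and the derivation $\partial$, by its values on the generators, and since $\Psi$ is an algebra homomorphism commuting with $\partial$ whose inverse is the analogous interchange on $\widehat E$, it suffices to prove the two section-level facts: that $\psi$ preserves the canonical pairing, and that $\psi$ carries the $H$-twisted Dorfman bracket to the $\widehat H$-twisted one.

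The first fact is immediate. In the splitting the pairing reads $\langle(\xi,\xi_w,\alpha,\alpha_p),(\chi,\chi_w,\beta,\beta_p)\rangle=\tfrac12(\iota_\chi\alpha+\iota_\xi\beta+\chi_w\alpha_p+\xi_w\beta_p)$, and the interchange $\psi$ leaves the horizontal terms $\iota_\chi\alpha+\iota_\xi\beta$ untouched while merely permuting the scalar products $\chi_w\alpha_p$ and $\xi_w\beta_p$ among themselves; hence $\langle\psi(\cdot),\psi(\cdot)\rangle=\langle\cdot,\cdot\rangle$, and by Proposition \ref{prop2} the symmetric forms on the Alekseev--Strobl side agree.

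The compatibility of the Dorfman brackets is the heart of the matter, and the only real obstacle. I would expand
$$\lb(\xi+\xi_wV,\alpha+\alpha_pA),(\chi+\chi_wV,\beta+\beta_pA)\rb_H=([X_1,X_2],\CL_{X_1}\omega_2-\iota_{X_2}d\omega_1+\iota_{X_1}\iota_{X_2}H),$$
with $X_i,\omega_i$ denoting the vector and one-form parts, and decompose the output into horizontal, vertical, basic and connection components. The invariance identities $\CL_VA=0$, $\iota_VA=1$, $dA=F$ and $\iota_VF=0$ reduce the Lie-bracket, Lie-derivative and $\iota\,d$ contributions to base data with curvature corrections, while the flux term is evaluated from $H=A\wedge\widehat F-\pi^*\Omega$, which yields $\iota_VH=\widehat F$ and deposits the curvature $\widehat F$ in the winding slot. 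Running the same computation on $\widehat E$ with $\widehat H=F\wedge\widehat A-\widehat\pi^*\Omega$, so that $d\widehat A=\widehat F$ and $\iota_{\widehat V}\widehat H=F$, one matches the two brackets component by component. The decisive point is that the curvatures are exchanged, $F\leftrightarrow\widehat F$, precisely in step with the winding--momentum exchange effected by $\psi$, the common basic term $\Omega$ accounting for the remaining pieces; this is exactly the information carried by the relation \eqref{corr}, $p^*H-\widehat p^*\widehat H=d\cF$. The bookkeeping of these curvature terms is the delicate part, the untwisted case $F=\widehat F=0$ being a routine verification that $\psi$ is an isomorphism of the standard Courant algebroids. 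Equivalently, this step is the Cavalcanti--Gualtieri theorem \cite{CG}, which may be cited directly.

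Combining the two section-level facts with Proposition \ref{prop2} shows that $\Psi$ preserves the $\lambda$-bracket on the generators; extending multiplicatively and $\partial$-equivariantly, $\Psi$ preserves the $\lambda$-bracket throughout $\cV_{AS}^\TT$, hence by Proposition \ref{thm1} both $\langle\cdot,\cdot\rangle$ and $\lb\cdot,\cdot\rb$, and $\partial$ by construction. As $\Psi$ is bijective, it is the desired isomorphism of Courant--Dorfman algebras; the strengthening from `weak' to genuine Courant--Dorfman algebras is as in the remark following Proposition \ref{prop2}.
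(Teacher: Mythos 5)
Your proposal is correct, and its key step is carried out by a genuinely different method from the paper's. Both arguments share the same skeleton: reduce, via Proposition \ref{prop2}, to the section-level claim that the interchange $\psi\colon(\xi,\xi_w,\alpha,\alpha_p)\mapsto(\xi,\alpha_p,\alpha,\xi_w)$ preserves the canonical pairing and carries $\lb\cdot,\cdot\rb_H$ to $\lb\cdot,\cdot\rb_{\hat H}$, and then transfer this back to $\cV_{AS}^\TT$ (your extension to non-generators via Leibniz and sesquilinearity is actually made more explicit than in the paper). Where you diverge is in how the section-level claim is proved. The paper follows the spinorial route of Cavalcanti--Gualtieri adapted to loops: it lets $\Gamma(LTE\oplus LT^*E)$ act by Clifford multiplication on $\Omega^\bullet(LE)$, writes the Dorfman bracket as a derived bracket $\lb s_1,s_2\rb_H\cdot\omega=[[d_H,s_1],s_2]\cdot\omega$, and establishes two intertwining identities, $T(s\cdot\omega)=\Psi(s)\cdot T(\omega)$ and $T\circ d_H=-d_{\hat H}\circ T$, from which bracket and pairing preservation drop out formally with no component bookkeeping. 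You instead compute the bracket directly in the connection splitting, using $\iota_VA=1$, $dA=F$, $\iota_VF=0$, $\iota_VH=\hat F$ and dually $d\hat A=\hat F$, $\iota_{\hat V}\hat H=F$, and match components; this is elementary and self-contained, and it makes visible exactly why the curvature exchange $F\leftrightarrow\hat F$ mirrors the winding--momentum exchange, at the cost of the tedious bookkeeping you acknowledge. One small slip in your narration: on $E$ the flux term deposits $-\hat F(\xi,\chi)$ in the \emph{momentum} slot (the coefficient of $A$ in the one-form part); it is only on the dual side, where $\hat F$ arises as the curvature $d\hat A$ of the Lie-bracket term, that it sits in the winding slot, and the two match under $\psi$ exactly as you claim, so the argument is unaffected. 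Your fallback of citing \cite{CG} outright is also legitimate --- the paper's proof is essentially the Cavalcanti--Gualtieri argument transplanted to loops --- provided one notes, as you implicitly do, that the generators of $\cV_{AS}^\TT$ are parametrized by invariant sections of the finite-dimensional bundle $TE\oplus T^*E$ extended pointwise to loops, so the finite-dimensional theorem applies pointwise in the loop parameter.
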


\begin{proof} The idea is to adapt the isomorphism between Courant algebroids \cite{CG} to the looped generalized tangent bundles. 
Firstly, there is a natural action 
$$\Gamma(LTE\oplus LT^*E) \times \Omega^\bullet(LE) \to \Omega^\bullet(LE)$$ 
given by the parity reversing map
$$ (\xi,\alpha)\cdot \omega = \iota_\xi \omega + \alpha\wedge \omega \ .$$
This further extends to an action by the Clifford bundle $Cl(LTE\oplus LT^*E)$ due to
\begin{equation}\label{clifford} 
(\xi,\alpha)\cdot ((\xi,\alpha)\cdot \omega) = \langle (\xi,\alpha),(\xi,\alpha)\rangle \omega \ ,
\end{equation}
so the space of differential forms on $LE$ becomes an irreducible spin module. 

Secondly, any $\TT$-invariant form on $LE$ obtained by looping an invariant form in $\Omega^\bullet(E)^\TT$ can be written as 
$$\omega = \pi^*(\alpha) + A\wedge \pi^*(\beta) $$
where $\pi\colon LE \to LM$, $\alpha,\beta \in \Omega^\bullet(LM)$ and $A$ is the looped connection on $LE$. We note that this is \emph{not} true for any invariant form in $\Omega^\bullet(LE)^\TT$. The map $T\colon\Omega^\bullet(E)^{\TT}\to \Omega^{\bullet+1}(\widehat E)^{\widehat \TT}$ extends pointwise to the looped bundles and sends $\omega$ to
$$T(\omega) = \hat \pi^*(\beta)-\hat A\wedge \hat \pi^*(\alpha) $$
where $\hat \pi\colon L\widehat E \to LM$. If $(\xi,\alpha)$ and $\omega$ are invariant, a straightforward calculation shows that
\begin{equation}\label{commute}  T((\xi ,\alpha )\cdot \omega) = \Psi((\xi , \alpha )) \cdot T(\omega) \ .\end{equation}

Thirdly, the Dorfman bracket on $\Gamma(TE^*\oplus TE)$ is a derived bracket in the sense that
$$ \lb (\xi,\alpha),(\chi,\beta)\rb_H \cdot \omega  = [[d_H,  (\xi,\alpha)],(\chi,\beta)]\cdot \omega$$ 
and again this extends  pointwise to $\Gamma(LTE^*\oplus LTE)$, as does the intertwining relation \eqref{intertwine}
It is now an easy task to show that the Dorfman bracket on $\Gamma(LTE^*\oplus LTE)$ is preserved by the T-duality map $\Psi$. Namely,
\begin{eqnarray*}
  \Psi(\lb (\xi,\alpha),(\chi,\beta)\rb_H ) \cdot T(\omega)&=&  T(\lb (\xi,\alpha),(\chi,\beta)\rb_H\cdot \omega) \\
  &=& T( [[d_H,  (\xi,\alpha)],(\chi,\beta)]\cdot \omega)\\
  &=& T( [[d_{\hat H},  \Psi((\xi,\alpha))],\Psi((\chi,\beta))]\cdot \omega)\\
  &=&  \lb \Psi((\xi,\alpha)),\Psi((\chi,\beta))\rb_{\hat H} \cdot T(\omega)
\end{eqnarray*} 
which implies  
$$\Psi(\lb (\xi,\alpha),(\chi,\beta)\rb_H )=\lb \Psi((\xi,\alpha)),\Psi((\chi,\beta))\rb_{\hat H} \ .$$
Similarly, the exchange of the bilinear forms
$$ \langle (\xi,\alpha),(\chi,\beta)\rangle= \langle\Psi((\xi,\alpha)),\Psi((\chi,\beta))\rangle $$
is  a simple consequence of the Clifford action \eqref{clifford} and \eqref{commute}. The assertion finally follows by applying Proposition \ref{prop2}.
\end{proof}
We have immediately the following result.
\begin{thm}\label{T-duality2}  The map $\Psi$ extends to an isomorphism of Poisson algebras of invariant Alekseev-Strobl functions,
$$(\cV_{AS}^\TT, \cdot, \{\cdot,\cdot\}) \cong (\widehat \cV_{AS}^{\widehat\TT}, \cdot, \{\cdot,\cdot\})$$
and consequently to an isomorphism of the associated Poisson vertex algebras  and of the Lie algebras of invariant Alekseev-Strobl functionals and currents.
\end{thm}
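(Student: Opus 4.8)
The plan is to bootstrap the result from Theorem~\ref{T-duality} together with Proposition~\ref{prop2}, rather than recompute anything. The honest Poisson bracket and the $\lambda$-bracket determine one another through the Fourier transform \eqref{fourier}, so it suffices to produce a map that preserves the commutative product, the derivation $\partial$, and the $\lambda$-bracket; the claims about the Poisson algebra, the associated Poisson vertex algebra, and the Lie algebras of invariant functionals and currents will then all be specialisations of this one statement.

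First I would extend $\Psi$ from the generating Alekseev-Strobl functions to the whole subalgebra $\cV_{AS}^\TT$. On generators the assignment $f_{(\xi,\xi_w,\alpha,\alpha_p)}\mapsto f_{(\xi,\alpha_p,\alpha,\xi_w)}$ is a linear isomorphism onto $\widehat\cV_{AS}^{\widehat\TT}$, being induced by the linear interchange $\Psi$ of the fibrewise winding and momentum summands of $LTE\oplus LT^*E$. Since this interchange merely relabels independent coordinate directions, it extends uniquely to an isomorphism of the generated commutative algebras, and it commutes with the total derivative $\partial$ because the looped-connection splitting is $\partial$-compatible. Thus preservation of the product and of $\partial$ is built into the construction, and the only thing left to verify is the bracket.

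The crucial observation is that, by Proposition~\ref{prop2}, on the Alekseev-Strobl generators every $j$-th product with $j\ge 2$ vanishes, so the full $\lambda$-bracket is already reconstructed from the $0$-th product (the Dorfman bracket $\lb\cdot,\cdot\rb$) and the $1$-st product (the symmetric pairing $\langle\cdot,\cdot\rangle$). Theorem~\ref{T-duality} states precisely that $\Psi$ intertwines both $\lb\cdot,\cdot\rb$ and $\langle\cdot,\cdot\rangle$ on invariant generators; hence $\Psi$ preserves $\{f_\lambda g\}$ for all generators $f,g$. Invoking the Leibniz and sesquilinearity axioms of a Poisson vertex algebra, any algebra map that already respects the product, $\partial$, and the $\lambda$-bracket on a generating set automatically respects the $\lambda$-bracket on the whole algebra. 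This propagates the bracket identity from the generators to all of $\cV_{AS}^\TT$, giving the Poisson vertex algebra isomorphism and, through \eqref{fourier}, the Poisson algebra isomorphism.

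The stated consequences are then immediate. Because the Lie brackets on functionals and on currents are expressed through the $\lambda$-bracket and product as $\{[f],[g]\}=[\{f_\lambda g\}]|_{\lambda=0}$ and $\{J_\phi(f),J_\psi(g)\}=[\{f\phi_\lambda g\psi\}]|_{\lambda=0}$, and because $\Psi$ commutes with $\partial$ (so it descends to the quotients defining $\cF_{AS}^\TT$ and acts on $\cD_{AS}^\TT$ with the test functions left untouched), the isomorphism of $\lambda$-brackets transports these brackets verbatim, Schwinger terms included. I expect the only genuinely delicate point to be the well-definedness of the multiplicative extension of $\Psi$ — that the fibrewise winding/momentum interchange respects all polynomial relations among the generators — which is where one uses the identification of the invariant coefficient algebras on $E$ and $\widehat E$ with functions pulled back from the common base $M$.
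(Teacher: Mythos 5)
Your proposal is correct and takes essentially the same route as the paper's proof: both deduce preservation of the bracket on generators from Proposition~\ref{prop2} (only the $0$-th and $1$-st products survive, given by the Dorfman bracket and the pairing) combined with Theorem~\ref{T-duality}, extend multiplicatively using that $\Psi$ respects the product and $\partial$, and descend to functionals and currents via the $\lambda$-bracket formulas, i.e.\ Proposition~\ref{prop1} and the inclusion $\cF_{AS}^\TT\subset \cD_{AS}^\TT$. The only differences are presentational: the paper works with the distributional form of the Poisson bracket and calls the multiplicative extension ``clear,'' whereas you phrase the same argument through the $j$-th products and spell out the extension step via Leibniz and sesquilinearity.
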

\begin{proof}
Recall that  the Poisson bracket between Alekseev-Strobl functions is given by
$$\{f_{(\xi,\alpha)}(t) , f_{(\chi,\beta)}(s)\} = -f_{\lb (\xi,\alpha), (\chi,\beta) \rb_H}(s) \delta(t-s) + 2 \langle (\xi,\alpha),(\chi,\beta)  \rangle(s) \partial_t \delta(t-s) \ ,$$
when viewed as distributions. Since the map $\Psi$  preserves the pointwise multiplication of functions, it clearly extends to an isomorphism of Poisson algebras. It is also clear that the $\lambda$-bracket, which is the Fourier transform of the Poisson bracket, is preserved under the T-duality map. 
Finally, the isomorphism of the invariant Alekseev-Strobl current algebras follows by Proposition \ref{prop1} and $\cF_{AS}^\TT\cong \widehat\cF_{AS}^{\widehat\TT}$ is a consequence of the  inclusion $\cF_{AS}^\TT \subset \cD_{AS}^\TT$.
\end{proof}

\begin{remark} An interesting consequence of T-duality is that it induces an isomorphism of the Lie algebra cohomologies of the invariant Alekseev-Strobl current algebras,
$$ H^\bullet(\cD_{AS}^\TT,\RR) \cong H^\bullet(\widehat \cD_{AS}^{\widehat\TT},\RR)$$
\end{remark}

Furthermore, we have the following correspondence diagram for the  phase spaces of the T-dual manifolds,
\begin{equation*}
\qquad\xymatrix @=4pc @ur 
{ (T^*LE,\omega_H) \ar[d]_\pi & T^*LE\times_{T^*LM} T^*L\widehat E \ar[d]^{\hat p} \ar[l]_p \\ T^*LM & (T^*L\widehat E, \omega_{\hat H}) \ar[l]^{\hat \pi}}
\end{equation*}
By \eqref{corr} it follows that the twisted symplectic structures are related by
$$ p^*\omega_H - \hat{p}^*\omega_{\hat H} = d \int_{S^1}\ev^*\Big( (q\circ p)^*A\wedge  (\hat q\circ \hat p)^*\hat A\Big) \ ,$$
where $q$ and $\hat q$ are the projection maps of the cotangent bundles to the loop spaces. 
In fact, in \cite{BHM} it was observed  that $\omega_H$ and $\omega_{\hat H}$ are both obtained by symplectic reduction of a $\TT\times \widehat \TT-$invariant symplectic form on the correspondence space.

Lastly, recall that a \emph{Dirac structure}  is a subbundle $L\subset TE\oplus T^*E$ that is Lagrangian and its space of sections $\Gamma(L)$ is closed under the Courant bracket. As noted in \cite{AS}, Dirac structures correspond to anomaly free current algebras (\emph{i.e.} the vanishing of the Schwinger term in \eqref{current}). Since Dirac structures are interchanged by the map $\Psi$, we conclude that anomaly cancellation is preserved under T-duality. 

\section{Quantization of Alekseev-Strobl current algebras} \label{vertex}
In this section we make some remarks on the quantization of Alekseev-Strobl current algebras and its behaviour under T-duality. To set the stage let us recall the definition of vertex algebras \cite{DSK}.

\begin{definition}
A \emph{vertex algebra} is  a quintuple 
$(V,\vac, \partial,[\cdot_{\lambda}\cdot],:\ :)$ such that
\begin{itemize}
\item[$(i)$] $(V,\partial,[\cdot_{\lambda}\cdot])$ is a Lie conformal algebra,
\item[$(ii)$] $(V,\vac,\partial,:\ :)$ is a unital differential algebra satisfying the \emph{strong quasi-commutativity} relation
\begin{equation}\label{q-com}
 :a:bc::-:b:ac::=:\left(\int_{-\partial}^0[a_{\lambda}b]d\lambda\right)c:
\end{equation}
\item[$(iii)$] The $\lambda$-bracket $[\cdot_{\lambda}\cdot]$ and the `normally ordered product' $:\ :$ are related by the \emph{non-commutative Wick formula}
\begin{equation} \label{wick}
[a_{\lambda}:bc:]= :[a_{\lambda}b]c: + :b[a_{\lambda}c]: + \int_0^\lambda [[a_{\lambda}b]_{\mu}c]d\mu \ .
\end{equation}
\end{itemize}
\end{definition}

Next let us consider a family of vertex algebras $(V_{\hbar}, \vac_{\hbar}, \partial_{\hbar}, [\cdot_{\lambda}\cdot]_\hbar, :\ :_{\hbar})$ depending on a formal parameter $\hbar$. In other words, $V_{\hbar}$ is a free module over $\mathbb R[[\hbar]]$ such that $[{V_{\hbar}}_{\lambda}V_{\hbar}]\subseteq \hbar V_{\hbar}$. The \emph{quasiclassical limit} of this family is defined by 
$$V = \lim_{\hbar\to 0}V_{\hbar} := V_{\hbar}/\hbar V_{\hbar}\ , $$ where we denote by $1,\partial,$ and $\cdot$ the images of  $\vac_{\hbar}, \partial_h,$ and $ :\ :_\hbar$ respectively in this quotient. Similarly we write $\{ a_{\lambda} b\} $ for the image of $\frac{[\tilde{a}_{\lambda}\tilde{b}]_\hbar}{\hbar}$, where $\tilde{a}, \tilde{b} \in V_{\hbar}$ are pre-images of $a,b \in V$. These are defined up to a multiple of  $\hbar$ which disappears when we pass to the quotient, so $\{a_{\lambda}b\}$ is independent of the choice of pre-images. 

It is not hard to check that since the integral terms (or `quantum corrections') in \eqref{q-com} and \eqref{wick} are of non-zero order in the parameter $\hbar$,
$$ \int_{-\partial}^0[a_{\lambda}b]d\lambda \in \hbar V_\hbar, \ \ \ \ \ \int_0^\lambda [[a_{\lambda}b]_{\mu}c]d\mu \in \hbar^2 V_\hbar \ ,$$
they vanish as `$\hbar$ tends to zero' and the quasiclassical limit $(V,1,\cdot, \partial,\{\cdot_{\lambda}\cdot\})$ becomes a \emph{Poisson vertex algebra}. Indeed, the commutativity of the product $\cdot$ follows from \eqref{q-com} by setting $c=\vac$. Similarly by rewriting \eqref{q-com}  as
$$ ::ab:c:-:a:bc::=:\int_0^\partial a [b_{\lambda}c]d\lambda: + :\int_0^\partial b[a_{\lambda}c]d\lambda:$$
we conclude that the product becomes associative in the limit $\hbar \to 0$. In other words, a Poisson vertex algebra is a vertex algebra without `quantum corrections'.

Quantization is the inverse operation to the quasiclassical limit. Below we explain a general procedure for quantizing a Lie conformal algebra \cite{K}.  There is a Lie algebra $R_{Lie}$ associated to any Lie conformal algebra $R$ defined by the same underlying vector space and with the Lie bracket
\begin{equation}\label{lie}[a,b] = \int_{-\partial}^0 [a_\lambda b]d\lambda\end{equation}  
for all $a,b \in R$. We construct a family of Lie conformal algebras by setting
\begin{equation}\label{family}
R_\hbar = R, \ \ \ [a_\lambda b]_\hbar := \hbar  [a_\lambda b] \ .
\end{equation}  
Now the associated universal enveloping algebra
\begin{equation}\label{universal}
\cU(R_{\hbar Lie}) = \cT(R)\Big/\left(a\otimes b - b\otimes a -\int_{-\partial}^0 [a_\lambda b]_\hbar d\lambda \right)
\end{equation}  
determines a family of vertex algebras $V_\hbar = \cU(R_{\hbar Lie})$. This follows by the fact that $R_{\hbar Lie}$ is canonically isomorphic to the \emph{creation Lie algebra} $ (\text{Lie}\ R_\hbar)_+$ associated to $R_\hbar$, so the vector space \eqref{universal} inherits the natural universal enveloping vertex algebra structure on 
$$\cU(\text{Lie}\ R_\hbar)/\cU(\text{Lie}\ R_\hbar)(\text{Lie}\ R_\hbar)_- = \cU((\text{Lie}\ R_\hbar)_+) \ .$$
Here $\text{Lie}\ R_\hbar=(\text{Lie}\ R_\hbar)_-\oplus (\text{Lie}\ R_\hbar)_+$ is the unique $\partial$-invariant splitting of the Lie algebra $\text{Lie}\ R_\hbar = R_\hbar[t,t^{-1}]/(\partial + \partial_t)R_\hbar[t,t^{-1}]$ with the bracket
$$ [at^m,bt^n] = \sum_{j=0}^m \binom{m}{j}(a_{(j)}b)t^{m+n-j}\ .$$
The vacuum vector $\vac\in V_\hbar$ corresponds to the image of $1\in \cU(\text{Lie}\ R_\hbar)$ and $\partial$ extends to $V_\hbar$ by derivations. The quasiclassical limit of $V_\hbar$ is the symmetric algebra
$$ V = \lim_{\hbar \to 0} V_\hbar = \cT(R)/(a\otimes b - b\otimes a) = \cS(R)$$
with its associative commutative product and with the $\lambda$-bracket $\{a_\lambda b\} = [a_\lambda b]$ for $a,b\in R$,  extended to $\cS(R)$ by  left and right Leibniz rule.

Returning to the invariant Alekseev-Strobl Poisson vertex algebra $(\cV^\TT_{AS},\{\cdot_\lambda\cdot\},\partial)$, we can apply the above described  quantization   to the underlying Lie conformal algebra. Let $V^\TT_{\hbar,AS}$ denote the associated family of vertex algebras. By  combining \eqref{lie}, \eqref{family} and Theorem \ref{T-duality2}, it is clear that the map $\Psi$ constructed in the previous section induces an isomorphism of these families of vertex algebras.

\begin{theorem} T-duality determines an isomorphism of the families of invariant Alekseev-Strobl vertex algebras,
$$(V^\TT_{\hbar,AS}, \vac, \partial, [\cdot_{\lambda}\cdot]_\hbar,:\ :_\hbar) \cong (\widehat{V}^{\hat\TT}_{\hbar,AS}, \vac, \partial, [\cdot_{\lambda}\cdot]_\hbar,:\ :_\hbar) \ . $$ 
\end{theorem}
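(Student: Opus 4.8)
The plan is to build on the chain of isomorphisms already established rather than re-prove anything at the level of brackets. The essential point is that the quantization procedure recalled just before the statement --- passing from a Lie conformal algebra $R$ to the family $R_\hbar$ via \eqref{family}, forming the creation Lie algebra $(\Lie R_\hbar)_+$, and taking the universal enveloping vertex algebra $\cU((\Lie R_\hbar)_+)$ --- is entirely \emph{functorial} in $R$. So the real content is that $\Psi$ is an isomorphism of the underlying Lie conformal algebras of $\cV^\TT_{AS}$ and $\widehat\cV^{\widehat\TT}_{AS}$, and then every subsequent construction is transported automatically.

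First I would invoke Theorem \ref{T-duality2}, which already gives that $\Psi$ is an isomorphism of Poisson vertex algebras $\cV^\TT_{AS}\cong\widehat\cV^{\widehat\TT}_{AS}$. By the remark after Proposition \ref{thm1}, suppressing the commutative product yields an isomorphism of the underlying Lie conformal algebras, i.e. $\Psi$ intertwines $\partial$ and the $\lambda$-brackets: $\Psi\{f_\lambda g\} = \{\Psi(f)_\lambda\Psi(g)\}$ and $\Psi\circ\partial = \partial\circ\Psi$. Second, I would apply $\Psi$ to the rescaled bracket $[a_\lambda b]_\hbar := \hbar[a_\lambda b]$ of \eqref{family}; since $\Psi$ is $\RR$-linear and commutes with the $\lambda$-bracket, it commutes with multiplication by the formal parameter $\hbar$ and hence induces an isomorphism of the families of Lie conformal algebras $R_\hbar\cong\widehat R_\hbar$ over $\RR[[\hbar]]$. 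Third, because the bracket \eqref{lie} on the associated Lie algebra $R_{Lie}$ is defined purely from the $\lambda$-bracket and $\partial$ by the integral $\int_{-\partial}^0[a_\lambda b]\,d\lambda$, the map $\Psi$ descends to an isomorphism $R_{\hbar Lie}\cong\widehat R_{\hbar Lie}$ of the associated Lie algebras, and likewise of the full annihilation/creation Lie algebras $\Lie R_\hbar$ through the formula for $[at^m,bt^n]$, which involves only the $j$-th products $a_{(j)}b$ that $\Psi$ preserves. Finally, applying the universal enveloping functor $\cU$ of \eqref{universal} to this isomorphism transports the vertex algebra structure $(\vac,\partial,[\cdot_\lambda\cdot]_\hbar,:\,:_\hbar)$ verbatim, giving the claimed isomorphism $V^\TT_{\hbar,AS}\cong\widehat V^{\widehat\TT}_{\hbar,AS}$.

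The only genuine obstacle I anticipate is bookkeeping rather than mathematics: one must check that $\Psi$ respects the $\partial$-invariant splitting $\Lie R_\hbar=(\Lie R_\hbar)_-\oplus(\Lie R_\hbar)_+$ so that it restricts to the creation subalgebra and hence descends to the quotient $\cU(\Lie R_\hbar)/\cU(\Lie R_\hbar)(\Lie R_\hbar)_-$. Since this splitting is characterized canonically (it is the \emph{unique} $\partial$-invariant one) and $\Psi$ intertwines $\partial$, it must carry the distinguished splitting on the source to that on the target, so this reduces to a uniqueness argument rather than an explicit computation. I would state this as the key lemma and note that everything else is an immediate consequence of the functoriality of the constructions \eqref{lie}--\eqref{universal}.
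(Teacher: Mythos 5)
Your proposal is correct and takes essentially the same approach as the paper: the paper's own proof consists of the single observation that combining Theorem \ref{T-duality2} with the functoriality of the constructions \eqref{lie}, \eqref{family} and \eqref{universal} makes the isomorphism ``clear,'' which is exactly the chain you trace out. Your explicit check that $\Psi$ respects the unique $\partial$-invariant splitting of $\Lie R_\hbar$ (and hence descends to the quotient defining the universal enveloping vertex algebra) supplies a detail the paper leaves implicit, but it is a fleshing-out of the same argument rather than a different route.
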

\begin{remark} Since the Alekseev-Strobl current algebra $\cD_{AS}^\TT$ is fully determined by the Lie conformal algebra structure on $\cV_{AS}^\TT$, the Theorem implies that the quantized current algebras are isomorphic under T-duality. 
Notice that by the argument above we have the following commutative diagram, 
\begin{equation*} 
\xymatrix{
\left(V, \cdot, \{\cdot_\lambda\cdot\}\right) \ar[r]_\cong \ar[d]_{Quantization}
& \left(V, \cdot, \{\cdot_\lambda\cdot\}\right)  \ar[d]^{Inclusion} \\
\ar[r]^{\hbar\to 0}
\left(V_\hbar, :\ :_\hbar, \{\cdot_\lambda\cdot\}_\hbar\right) & \,  \left(S(V), \circ, \{\cdot_\lambda\cdot\}\right)  .
}
\end{equation*}
The horizontal maps can be viewed as an analogue of the symbol maps (of differential operators) and
the Poincar\'e-Birkhoff-Witt Theorem in this context.

\end{remark}


\end{document}